\renewcommand{\leq}{\leqslant}
\renewcommand{\geq}{\geqslant}
\newcommand{\C}{\mathbb{C}}
\newcommand{\E}{\mathbb{E}}
\newcommand{\U}{\mathcal{U}}
\newcommand{\D}{\mathcal{D}}
\newcommand{\M}{\mathcal{M}}
\renewcommand{\S}{\mathcal{S}}
\newcommand{\V}{\mathcal{V}}
\newcommand{\ol}{\overline}
\newcommand{\floor}[1]{\lfloor #1 \rfloor}
\DeclareMathOperator{\trace}{Tr}
\DeclareMathOperator{\I}{I}
\DeclareMathOperator{\id}{id}
\DeclareMathOperator{\Wg}{Wg}
\DeclareMathOperator{\Mob}{Mob}
\DeclareMathOperator{\Rem}{Rem}
\newcommand{\F}{\mathcal{F}}
\renewcommand{\phi}{\varphi}
\newcommand{\iy}{\infty}
\newcommand{\gammat}{\tilde{\gamma}}
\newcommand{\fhat}{\hat{f}}
\newcommand{\ketbra}[2]{| #1 \rangle \langle #2 |}
\newtheorem{theorem}{Theorem}[section]
\newtheorem{definition}[theorem]{Definition}
\newtheorem{proposition}[theorem]{Proposition}
\newtheorem{remark}[theorem]{Remark}
\title{Eigenvalue and Entropy Statistics for Products of Conjugate Random Quantum Channels}
\author[,1,2]{Beno\^it Collins\footnote{bcollins@uottawa.ca}}
\author[,1]{Ion Nechita\footnote{inechita@uottawa.ca}}
\affil[1]{Dept. of Mathematics and Statistics, University of Ottawa, ON, Canada}
\affil[2]{CNRS, Institut Camille Jordan, Universit\'e  Lyon 1, France}
\date{}
\begin{document}
\maketitle
\abstract{Using the graphical calculus and integration techniques introduced by the authors,
we study the statistical properties 
of outputs of products of random quantum channels for entangled inputs. In particular, we revisit and generalize models of relevance 
for the recent counterexamples to the minimum output entropy additivity problems. 
Our main result is a classification of regimes for which the von Neumann entropy is lower on average than the elementary bounds that can 
be obtained with linear algebra techniques.
}

\section{Introduction}

As in classical computer science, randomized proofs and constructions are ubiquitous in quantum information. Since quantum mechanics is non commutative, the random objects of study are matrices. Therefore, quantum information theory provides a rich source of random matrix problems.

One of the most important classes of problems in the mathematical aspects of quantum information theory is the study of data transmission through noisy quantum channels. A famous conjecture reduced the calculation of the channel capacity for classical data to the question of the additivity of Minimum channel Output Entropy. The conjecture was stated in 1999 by King and Ruskai \cite{king-ruskai} and shown to be equivalent to the additivity of the Holevo capacity (and to other quantities of interest) by Shor \cite{shor}. For a long time, no counterexamples were available and additivity was proven to hold in many cases. A stronger, $L^p$ version of this question was also available and relevant to operator algebra and operator space theory. This version was disproved by Hayden and Winter in 2007, for all $p>1$ \cite{hayden-winter}.
The original conjecture, regarding von Neumann entropies, was disproved by Hastings in 2008 \cite{hastings}. His very innovative argument exploited the idea of tubular neighborhoods so as to considerably refine available estimates on random quantum channels. However, constructive, non-random counterexamples to any of these conjectures are still elusive. 

The random counterexamples to the various forms of the additivity conjecture rely on bounds on the Minimum Output Entropies (MOE) for single and product channels that follow mainly from two important ideas. Let $\Phi$ be a random quantum channel between matrix spaces such that the dimensions of the input and output spaces are large enough. The first key idea is that, with high probability, the Minimum Output Entropy of $\Phi$ is almost maximal: all output states are highly mixed. On the other hand, if one considers the product channel $\Phi\otimes \overline\Phi$ (where $\overline\Phi$ is obtained by replacing the Stinespring unitary $U$ defining the channel by its conjugate), then if one takes a maximally entangled (or Bell) state as an input, the output density matrix has \emph{always} a large eigenvalue. This second important fact was observed by Winter, and it implies that the output state in question has low entropy, allowing for a violation of additivity.

Our work addresses bounds for conjugate product channels, and improves them in several cases. We provide a complete spectral description of the output of product channels when the input is maximally entangled. In \cite{collins-nechita-1} and \cite{collins-nechita-3}, we have studied situations when the channels are conjugate ($\Phi \otimes \ol \Phi$) or independent ($\Phi \otimes \Psi$) in two different asymptotic regimes. 

In this work, after recalling the aforementioned results and reviewing the techniques used in deriving them, we consider more general models of random quantum channels, from two different perspectives. We first generalize the linear scaling asymptotic regime to include the situations where the dimension of the input space is different from the dimension of the output of a quantum channel; however, all three parameters (the respective dimensions of the input, output and ancilla spaces) scale linearly. Then we move beyond the linear regime, considering situations where the dimensions of the output space and of the ancilla space scale in a non-linear fashion. Motivated by the search of improved bounds one may use in the study of additivity questions, we compute asymptotic expressions for the von Neumann entropies of output matrices for the models under consideration. 

The paper is organized as follows: in Section \ref{sec:review}, we first review the tools available to study moments of outputs of random quantum channels. These techniques were introduced in \cite{collins, collins-sniady} and \cite{collins-nechita-1} and their first applications to quantum information theory were developed in \cite{collins-nechita-1, collins-nechita-3} and \cite{collins-nechita-zyczkowski}. In Section \ref{sec:linear-gen}, we generalize the results of \cite{collins-nechita-3} to the case where the relative dimensions of the input and the output are different. In Section \ref{sec:non-linear}, we generalize the setting of \cite{collins-nechita-3} to the case where the relative dimensions of the input space and the ancilla space have relative polynomial growth. This is motivated by the recent results of \cite{aubrun-szarek-werner-2}, where the authors consider the case $k\sim n^{1/2}$ ($n$ being the dimension of the input/output space and $k$ being the dimension of the ancilla space). We show that depending on the growth, different results occur and that the case where the ancilla space and the input space have the same dimension has a potential for yielding a bigger violation for the additivity of the entropy. Finally, in Section \ref{sec:asympt-ent}, we use these results to provide new bounds for von Neumann entropy of the output of product random quantum channels.

\section{Studying moments of outputs of random quantum channels: techniques and first examples}\label{sec:review}

In this section, we recall, for the convenience of the reader and for the sake of being self-contained, 
techniques to compute the eigenvalue distribution of 
random quantum channels, as well as a few results obtained recently with these techniques. 

The techniques rely on Weingarten calculus (subsection \ref{sub:wg}) and on a graphical model (subsection \ref{sec:planar}). Then, in subsections \ref{sec:k-fixed} and \ref{sec:k-linear}, we recall two applications of these techniques.

\subsection{Weingarten calculus}\label{sub:wg}

In this section, we recall a few facts about the Weingarten calculus, 
useful to evaluate averages  with respect to 
the Haar measure on the unitary group.

\begin{definition}
The unitary Weingarten function 
$\Wg(n,\sigma)$
is a function of a dimension parameter $n$ and of a permutation $\sigma$
in the symmetric group $\S_p$ on $p$ elements, defined as
 the pseudo-inverse of the function $\sigma \mapsto n^{\#  \sigma}$ under the convolution 
for the symmetric group ($\# \sigma$ denotes the number of cycles of the permutation $\sigma$).
\end{definition}

Note that the  function $\sigma \mapsto n^{\# \sigma}$ is invertible for $n\geq p$,
(to see that it is invertible for $n$ large enough, observe that it behaves like $n^p\delta_e$ as $n\to\infty$).
In this case, we can replace the pseudo-inverse by the inverse.
We refer to \cite{collins-sniady} for historical references and further details. 
We shall use the shorthand notation $\Wg(\sigma) = \Wg(n, \sigma)$ when the dimension parameter $n$ is obvious.

The following theorem 
relates integrals with respect to the Haar measure on the unitary group $\U(n)$
and the Weingarten function $\Wg$. 
(see for example \cite{collins-imrn}):

\begin{theorem}
\label{thm:Wg}
 Let $n$ be a positive integer and
$(i_1,\ldots ,i_p)$, $(i'_1,\ldots ,i'_p)$,
$(j_1,\ldots ,j_p)$, $(j'_1,\ldots ,j'_p)$
be $p$-tuples of positive integers from $\{1, 2, \ldots, n\}$. Then
\begin{multline}
\label{bid} \int_{\U(n)}U_{i_1j_1} \cdots U_{i_pj_p}
\overline{U_{i'_1j'_1}} \cdots
\overline{U_{i'_pj'_p}}\ dU=\\
\sum_{\sigma, \tau\in \S_{p}}\delta_{i_1i'_{\sigma (1)}}\ldots
\delta_{i_p i'_{\sigma (p)}}\delta_{j_1j'_{\tau (1)}}\ldots
\delta_{j_p j'_{\tau (p)}} \Wg (n,\tau\sigma^{-1}).
\end{multline}

If $p\neq p'$ then
\begin{equation} \label{eq:Wg_diff} \int_{\U(n)}U_{i_{1}j_{1}} \cdots
U_{i_{p}j_{p}} \overline{U_{i'_{1}j'_{1}}} \cdots
\overline{U_{i'_{p'}j'_{p'}}}\ dU= 0.
\end{equation}
\end{theorem}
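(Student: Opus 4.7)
The two parts of the statement can be handled separately, and the vanishing statement \eqref{eq:Wg_diff} is the natural warm-up. The Haar measure on $\U(n)$ is invariant under $U \mapsto e^{i\theta} U$ for any $\theta \in \R$, so after this substitution the integrand acquires a factor $e^{i\theta (p-p')}$. Averaging over $\theta \in [0, 2\pi]$ before integrating over $\U(n)$ forces the integral to vanish whenever $p \neq p'$.

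For the main formula, the plan is to view the left-hand side of \eqref{bid} as a matrix element of the operator
\begin{equation*}
T_p = \int_{\U(n)} U^{\otimes p} \otimes \ol{U}^{\otimes p} \, dU,
\end{equation*}
regarded as an endomorphism of $(\C^n)^{\otimes p} \otimes ((\C^n)^*)^{\otimes p}$. Left--right invariance of Haar measure together with $T_p^2 = T_p$ and $T_p^* = T_p$ shows that $T_p$ is the orthogonal projection onto the subspace of $\U(n)$-invariants. The actual integral is then obtained by reading off a matrix element in the standard tensor basis.

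The next step identifies this invariant subspace. Schur--Weyl duality applied to $\End((\C^n)^{\otimes p})$ says that the commutant of $U^{\otimes p}$ is spanned by the natural action of $\S_p$ by permutation of tensor factors. Unfolding this identification gives a spanning family $\{|\sigma\rangle\}_{\sigma \in \S_p}$ for the invariants, with
\begin{equation*}
|\sigma\rangle = \sum_{i_1,\ldots,i_p = 1}^n |i_1, \ldots, i_p\rangle \otimes |i_{\sigma(1)},\ldots, i_{\sigma(p)}\rangle,
\end{equation*}
and for $n \geq p$ these vectors are linearly independent. A direct count of free indices gives the Gram matrix $\scalar{\sigma}{\tau} = n^{\#(\sigma^{-1}\tau)}$, which is exactly the function whose (pseudo-)inverse in the group algebra of $\S_p$ defines $\Wg(n, \cdot)$.

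To conclude, I would write $T_p = \sum_{\sigma, \tau \in \S_p} \Wg(n, \tau\sigma^{-1}) |\sigma\rangle\langle\tau|$ and verify that the right-hand side is indeed the orthogonal projection onto the span of the $|\sigma\rangle$'s; this reduces to the convolution identity built into the definition of $\Wg$. Expanding each $|\sigma\rangle\langle\tau|$ in the standard basis produces precisely the two products of Kronecker deltas appearing in \eqref{bid}. The main obstacle is bookkeeping rather than conceptual: matching the conventions for which indices are paired by $\sigma$ and which by $\tau$, and making sure the argument $\tau\sigma^{-1}$ (rather than $\sigma^{-1}\tau$ or $\sigma\tau^{-1}$) emerges correctly once the Gram matrix is inverted.
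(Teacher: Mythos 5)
The paper does not prove Theorem \ref{thm:Wg}; it is quoted as a known result with a citation to \cite{collins-imrn}, so there is no in-text proof to compare against. That said, your proposal is a correct and essentially standard derivation of the Weingarten formula, and it matches the approach in the cited references: the vanishing for $p \neq p'$ follows from the $U \mapsto e^{i\theta}U$ invariance of Haar measure; the integral in \eqref{bid} is a matrix element of the orthogonal projection $T_p$ onto the $\U(n)$-invariants of $(\C^n)^{\otimes p} \otimes ((\C^n)^*)^{\otimes p}$; Schur--Weyl duality identifies a spanning set $\{|\sigma\rangle\}_{\sigma \in \S_p}$ with Gram matrix $n^{\#(\sigma^{-1}\tau)}$; inverting that Gram matrix in $\C[\S_p]$ gives $\Wg$, and expanding $|\sigma\rangle\langle\tau|$ in the standard basis yields the product of Kronecker deltas. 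Two small caveats worth making explicit if you write this up in full: for $n < p$ the vectors $|\sigma\rangle$ are linearly dependent, so you must use the pseudo-inverse (as the paper's Definition 2.1 anticipates) and check that $T_p$ is still the stated sum despite the non-uniqueness of coefficients; and, as you flagged, the final identification of $\Wg(n,\tau\sigma^{-1})$ versus $\Wg(n,\sigma^{-1}\tau)$ requires care, though since $\Wg$ is a central function of its permutation argument (it depends only on cycle type, so $\Wg(n,\tau\sigma^{-1}) = \Wg(n,\sigma^{-1}\tau)$), this particular worry evaporates once that symmetry is noted.
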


We are interested in the values of the Weingarten function in the limit $n \to \iy$. 
The following result encloses all the data we need for our computations
about the asymptotics of the $\Wg$ function; see \cite{collins-imrn} for a proof.

\begin{theorem}\label{thm:mob} For a permutation $\sigma \in \S_p$, 
let $\mathrm{Cycles}(\sigma)$ denote the set of cycles of $\sigma$. Then
\begin{equation}
\Wg (n,\sigma )=(-1)^{n-\# \sigma}
\prod_{c\in \mathrm{Cycles} (\sigma )}\Wg (n,c)(1+O(n^{-2}))
\end{equation}
and 
\begin{equation}
\Wg (n,(1,\ldots ,d) ) = (-1)^{d-1}c_{d-1}\prod_{-d+1\leq j \leq d-1}(n-j)^{-1}
\end{equation}
where $c_i=\frac{(2i)!}{(i+1)! \, i!}$ is the $i$-th Catalan number.
\end{theorem}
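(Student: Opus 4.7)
The plan is to work from the character formula for the Weingarten function. Recalling that $\sigma\mapsto n^{\#\sigma}$ is, on the group algebra of $\S_p$, diagonal in the irreducible basis with eigenvalue $s_\lambda(1^n) / \dim\lambda$ on the isotypic component of $\lambda$ (via Schur--Weyl duality between $\U(n)$ and $\S_p$ acting on $(\C^n)^{\otimes p}$), the pseudo-inverse unfolds as
\begin{equation*}
\Wg(n,\sigma) \;=\; \frac{1}{(p!)^2}\sum_{\substack{\lambda\vdash p\\ \ell(\lambda)\leq n}}\frac{(\dim\lambda)^{2}\,\chi^{\lambda}(\sigma)}{s_\lambda(1^n)},
\end{equation*}
where $s_\lambda(1^n)$ is given by the hook content formula $\prod_{(i,j)\in\lambda}(n+j-i)/h(i,j)$. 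This is the only input I would need; from here the two assertions of Theorem~\ref{thm:mob} are separate arguments.

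For the explicit cycle formula, the first step is to apply the Murnaghan--Nakayama rule to $\sigma=(1,\ldots,d)$: the character $\chi^\lambda(\sigma)$ vanishes unless $\lambda$ is a hook $(d-k,1^k)$, in which case $\chi^\lambda(\sigma)=(-1)^{k}$ and $\dim\lambda=\binom{d-1}{k}$. Substituting into the character formula and simplifying $s_{(d-k,1^k)}(1^n)$ via the hook content formula, I would obtain a finite sum over $k=0,\ldots,d-1$ whose denominators assemble into $\prod_{j=-d+1}^{d-1}(n-j)$. The combinatorial identity to verify is then
\begin{equation*}
\sum_{k=0}^{d-1}(-1)^{k}\binom{d-1}{k}^{2}\prod_{\substack{-d+1\leq j\leq d-1\\ j\neq k-(d-1-k)\text{-type}}}(\cdots)\;=\;(-1)^{d-1}c_{d-1},
\end{equation*}
which can be proved either by a partial-fraction/residue calculation or, more cleanly, by recognising $c_{d-1}$ as the number of non-crossing pair partitions and using the Frobenius--Jucys expansion for a single cycle.

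For the multiplicative asymptotics, the natural route is to control the ratio $\chi^\lambda(\sigma)/\dim\lambda$ as $n$ (hence the allowed shapes) grows. By a standard estimate (Biane, Kerov--Olshanski), for each $\lambda$ the normalised character $\chi^\lambda(\sigma)/\dim\lambda$ admits an expansion in the Jucys--Murphy variables whose leading behaviour factorises over disjoint cycles of $\sigma$ and whose corrections sit at order $n^{-2}$ relative to the leading term. Equivalently, one can expand $1/s_\lambda(1^n)$ from the hook content formula as a series in $1/n$; the parity of the permutation implies that the first correction term cancels, so that the error is indeed $O(n^{-2})$ rather than $O(n^{-1})$. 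Summing the leading part over $\lambda$ reproduces $\prod_c \Wg(n,c)$ up to the overall sign coming from $\operatorname{sgn}(\sigma)=(-1)^{p-\#\sigma}$.

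The main obstacle, as usual in Weingarten asymptotics, is the sub-leading analysis: simply matching the leading $n^{-(2p-\#\sigma)}$ order is easy, but proving that the next correction is $O(n^{-2})$, not $O(n^{-1})$, requires exploiting the cancellation of odd-order terms in the $1/n$ expansion of the character ratio. This is precisely where the parity/sign factor enters, and it is the step I would handle last, after the exact cycle formula is in place to serve as a consistency check.
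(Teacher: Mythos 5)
The paper does not prove this theorem; it refers to \cite{collins-imrn} for both claims, and your sketch reconstructs precisely that route: the character formula for $\Wg$, Murnaghan--Nakayama for a full cycle, the hook--content formula, and a $1/n$-expansion of $1/s_\lambda(1^n)$. Two points need correction, however, one structural and one a genuine error.

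On the asymptotic expansion, invoking Biane / Kerov--Olshanski is a misdirection: those asymptotics concern characters of $\S_n$ on \emph{large} Young diagrams, whereas here the shapes $\lambda\vdash p$ are fixed and $n$ enters only through $s_\lambda(1^n)$. The parity you want is cleanest as a direct symmetry of $\Wg$. The hook--content formula gives $s_\lambda(1^{-n})=(-1)^{p}s_{\lambda'}(1^{n})$, and since $\chi^{\lambda'}=\operatorname{sgn}\cdot\chi^{\lambda}$ with $\dim\lambda'=\dim\lambda$, the character formula yields
\begin{equation*}
\Wg(-n,\sigma)=(-1)^{p+|\sigma|}\,\Wg(n,\sigma),
\end{equation*}
so that $n^{p+|\sigma|}\Wg(n,\sigma)$ is an \emph{even} power series in $1/n$. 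This is the rigorous version of your ``the first correction term cancels,'' and it immediately upgrades the multiplicativity error from $O(n^{-1})$ to $O(n^{-2})$.

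On the sign, your claim that summing the leading terms over $\lambda$ reproduces $\prod_{c}\Wg(n,c)$ ``up to the overall sign $\operatorname{sgn}(\sigma)=(-1)^{p-\#\sigma}$'' is wrong, and the analogous factor $(-1)^{n-\#\sigma}$ in the paper's statement is a typo. Each factor $\Wg(n,c)$ already carries the sign $(-1)^{|c|-1}$ from the explicit cycle formula, and these compound to exactly $(-1)^{p-\#\sigma}=\operatorname{sgn}(\Mob(\sigma))$, which matches the sign of $\Wg(n,\sigma)$; no extra sign should appear. (Check on $\sigma=(12)(3)\in\S_3$: $\Wg(n,(12))\Wg(n,(3))\sim -n^{-4}$ and $\Wg(n,(12)(3))\sim\Mob((12)(3))\,n^{-4}=-n^{-4}$.) Finally, the hook computation and the resulting identity for a single cycle are the right target, but as stated they are a plan, not a proof: after clearing denominators you must show a polynomial of a priori degree $d-1$ in $n$ collapses to the constant $(-1)^{d-1}c_{d-1}$, and either the residue argument or a Catalan bijection still has to be executed.
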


The Catalan numbers and $\Wg$ are related to the Moebius function on the lattice of non-crossing partitions, as follows:
\begin{equation}
\Wg(n,\sigma) = n^{-(p + |\sigma|)} (\Mob(\sigma) + O(n^{-2}))
\end{equation}
where $|\sigma |=p-\# \sigma $ is the \emph{length} of $\sigma$, i.e. the minimal number of transpositions that multiply to $\sigma$. We refer to \cite{collins-sniady} for details about the function $\Mob$.

\subsection{Planar expansion}
\label{sec:planar}

The purpose of the graphical calculus introduced in  \cite{collins-nechita-1} is to yield an effective method to evaluate
the expectation of random tensors with respect to the Haar measure on a unitary group. 
In graphical language, a tensor corresponds to a \emph{box}, and an appropriate Hilbertian structure yields a correspondence
between boxes and tensors. 
However, the calculus yielding expectations only relies on diagrammatic operations. 

Each box $B$ is represented as a rectangle with decorations on its boundary. The decorations are either white or black, and
belong to $S(B)\sqcup S^*(B)$. 
Figure \ref{fig:box} depicts an example of boxes and diagrams.

\begin{figure}[ht]
\centering
\subfigure[]{\label{fig:box}\includegraphics{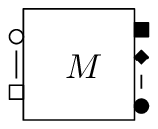}}\qquad\qquad
\subfigure[]{\label{fig:trace}\includegraphics{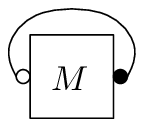}}\qquad\qquad
\subfigure[]{\label{fig:multiplication}\includegraphics{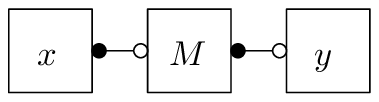}}\\
\subfigure[]{\label{fig:product}\includegraphics{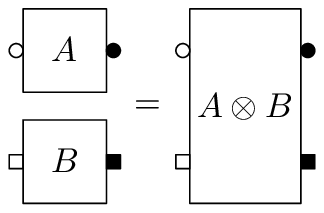}}
\caption{Basic diagrams and axioms: (a) diagram for a general tensor $M$;
 (b) trace of a $(1,1)$-tensor (matrix) $M$; (c) Scalar product $\langle y \;|\; M \;|\; x \rangle$; (d) tensor product of two diagrams. The labels round, square and diamond-shaped labels correspond to pairs of dual finite dimensional complex Hilbert spaces.}
\end{figure}

It is possible to construct new boxes out of old ones by formal algebraic operations such as sums or products.
We call \emph{diagram} a picture consisting in boxes and wires according to the following rule:
a wire may link a white decoration in $S(B)$ to its black counterpart in $ S^*(B)$.
A diagram can be turned into a box by choosing an orientation and a starting point.

Regarding the Hilbertian structure, wires correspond to tensor contractions. 
There exists an involution for boxes and diagrams. It is antilinear and it turns a decoration 
in $S(B)$ into its counterpart in $ S^*(B)$.
Our conventions are close to those of \cite{coecke,jones}.
They should be familiar to 
 the reader acquainted with existing graphical calculus of various types
(planar algebra theory, Feynman diagrams theory, traced category theory).
Our notations are designed to 
fit well to the problem of computing expectations, as shown in the next section. In Figure \ref{fig:trace}, \ref{fig:multiplication} and \ref{fig:product} we depict the trace of a matrix, multiplication of tensors and the tensor product operation.
For details, we refer to \cite{collins-nechita-1}.

The main application of our calculus is to compute expectation of diagrams where some boxes represent random matrices (e.g.
Haar distributed or Gaussian).
For this, we need a concept of \emph{removal} of boxes $U$ and $\ol U$.
A removal $r$ is a way to pair decorations of the $U$ and $\ol U$ boxes appearing in a diagram. 
It therefore consists in  a pairing $\alpha $ of the white decorations of $U$  boxes with the white decorations of $\ol U$ boxes, 
together with a pairing $\beta $ between the black decorations of $U$ boxes and the black decorations of $\ol U$ boxes. 
Assuming that $\D$ contains $p$ boxes of type $U$ and that the boxes $U$ (resp. $\ol U$) are labeled from $1$ to $p$, 
then $r=(\alpha,\beta)$ where $\alpha,\beta$ are permutations of $\mathcal{S}_p$.

Given a removal $r \in \Rem(\D)$, we construct a new diagram $\D_r$ associated to $r$, which has the important property that it no longer contains boxes of type $U$ or $\ol U$. 
One starts by erasing the boxes $U$ and $\ol U$ but keeps the decorations attached to them. 
Assuming that one has labeled the erased boxes $U$ and $\ol U$ with integers from $\{1, \ldots, p\}$, one connects \emph{all} the (inner parts of the) \emph{white} decorations of the $i$-th erased $U$ box with the corresponding (inner parts of the) \emph{white} decorations of the $\alpha(i)$-th erased $\ol U$ box. In a similar manner, one uses the permutation $\beta$ to connect black decorations. 

In \cite{collins-nechita-1}, we proved the following result:
\begin{theorem}\label{thm:Wg_diag}
The following holds true:
\[\E_U(\D)=\sum_{r=(\alpha, \beta) \in \Rem_U(\D)} \D_r \Wg (n, \alpha\beta^{-1}).\]
\end{theorem}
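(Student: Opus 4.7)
The plan is to derive Theorem \ref{thm:Wg_diag} from the scalar-indexed Weingarten formula (Theorem \ref{thm:Wg}) by a careful translation between the graphical and tensor languages. First I would expand $\D$ in coordinates: by the rules of the graphical calculus, the scalar value of the diagram is obtained by summing, over all index labels attached to its decorations, the product of the entries contributed by each box and the Kronecker deltas enforced by each wire. Each $U$ box contributes a matrix entry of the form $U_{i_k j_k}$, where $i_k$ collects the indices at (say) its black decorations and $j_k$ the indices at its white decorations; each $\ol U$ box contributes an entry $\ol U_{i'_\ell j'_\ell}$ analogously; fixed boxes contribute deterministic tensor entries that are unaffected by $\E_U$.

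Next, exchanging expectation and the finite sums, the only random factor becomes $\prod_{k=1}^{p} U_{i_k j_k}\,\prod_{\ell=1}^{p'} \ol U_{i'_\ell j'_\ell}$. If $p \neq p'$, equation \eqref{eq:Wg_diff} forces $\E_U(\D)=0$, matching the convention that $\Rem_U(\D)$ is empty in this case. Otherwise, applying \eqref{bid} turns the integral into a sum over $(\sigma,\tau)\in \S_p\times \S_p$ of products of deltas $\delta_{i_k i'_{\sigma(k)}}\delta_{j_k j'_{\tau(k)}}$, weighted by $\Wg(n,\tau\sigma^{-1})$.

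The final step is to recognize each summand as $\D_r$ for a unique removal $r$. The pair $(\sigma,\tau)$ is precisely the combinatorial data of a removal: $\sigma$ pairs the black decorations of the $k$-th $U$ box with those of the $\sigma(k)$-th $\ol U$ box, while $\tau$ does the same for the white decorations. When the deltas produced by the Weingarten formula are combined with those already present in $\D$, the resulting tensor contractions are exactly those defining the diagram obtained by erasing the $U$ and $\ol U$ boxes and reconnecting their decorations as prescribed by $r=(\alpha,\beta)$. Since $\Wg(n,\cdot)$ is a class function invariant under inversion (because cycle type is preserved by $\sigma\mapsto\sigma^{-1}$), the weight $\Wg(n,\tau\sigma^{-1})$ coincides with $\Wg(n,\alpha\beta^{-1})$ under the identification of $\sigma,\tau$ with $\alpha,\beta$, which completes the argument.

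The main obstacle I anticipate is purely bookkeeping: a single $U$ box in $\D$ may carry several decorations corresponding to the factors of a tensor-product Hilbert space, so the scalar indices $i_k,j_k$ in Theorem \ref{thm:Wg} must be read as composite multi-indices and the associated deltas as products of deltas over each factor. This introduces no conceptual difficulty — it follows from iterating \eqref{bid} factorwise by multilinearity — but it does require carefully setting up the bijection between pairs $(\sigma,\tau)\in \S_p\times \S_p$ and elements of $\Rem_U(\D)$ so that the translation between deltas and diagrammatic wires is consistent across all boxes and all tensor factors simultaneously.
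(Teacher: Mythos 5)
Your proposal is correct and it is essentially the same argument as the one given for this statement in \cite{collins-nechita-1}: expand the diagram as a sum over indices of products of box entries and wire deltas, push $\E_U$ through the finite sums, apply Theorem \ref{thm:Wg} to the product of $U$ and $\ol U$ entries (with composite indices when a box carries several decorations, so that the single delta of Theorem \ref{thm:Wg} factors into one delta per decoration), and then recognize that the resulting deltas reconnect the exposed decorations exactly as the removal $r=(\alpha,\beta)$ prescribes, giving $\D_r$ with weight $\Wg(n,\alpha\beta^{-1})$ after matching $(\sigma,\tau)$ with $(\alpha,\beta)$ and using, if the convention so requires, that $\Wg$ is a class function invariant under inversion. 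The only minor imprecision is the phrase ``iterating \eqref{bid} factorwise'': one applies \eqref{bid} a single time for $\U(n)$ with $n$ the full dimension of the space on which $U$ acts, and it is the Kronecker delta on the composite index that factorizes across tensor factors, not the Weingarten formula itself.
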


\subsection{Wishart matrices, Marchenko-Pastur distributions and their entropy}\label{sec-application-wishart}

We recall the definition of a \emph{free Poisson} (or Marchenko-Pastur) random variable \cite{MP}. For $c>0$, the probability measure
\[\pi_c=\max (1-c,0)\delta_0+\frac{\sqrt{4c-(x-1-c)^2}}{2\pi x} \; \mathbf{1}_{[1+c-2\sqrt{c},1+c+2\sqrt{c}]}(x) \; dx\]
is called a free Poisson measure of parameter $c$. The plots of the densities for these measures are plotted in Figure \ref{fig:MP-density}.

\begin{figure}[ht]
\centering
\includegraphics[width=0.4\textwidth]{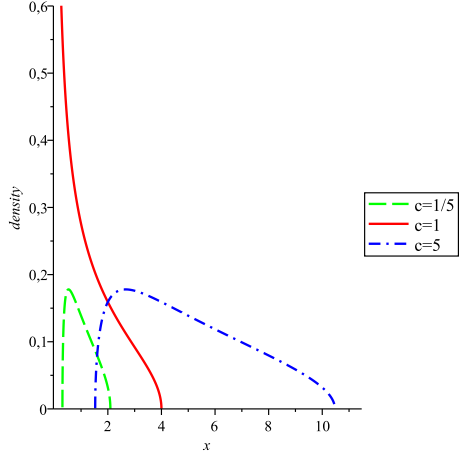}
\caption{Densities for the Marchenko-Pastur measures of parameters $c=1/5$, $c=1$ and $c=5$. For $c=1/5$, only the absolutely continuous part of the measure was plotted; $\pi_{1/5}$ has a Dirac mass of $4/5$ at $x=0$ which is not represented.}
\label{fig:MP-density}
\end{figure}

The free Poisson distribution arises in random matrix theory as the almost sure limit of the eigenvalue counting distribution for Wishart matrices, i.e. matrices $X_nX_n^*$ where $X_n$ is an $n\times \floor{cn}$ matrix whose entries are i.i.d. standard complex Gaussian random variables of variance $1/n$. 

From a combinatorial perspective, the free Poisson distribution $\pi_c$ has the nice property that all its \emph{free} cumulants are equal to $c$. Hence, the free moment-cumulant formula (see \cite{nica-speicher}, Lecture 11, pp. 173) reads
\begin{equation}
	\int x^p \; d\pi_c(x) = \sum_{\sigma \in NC(p)} c^{\# \sigma},
\end{equation}
where $\# \sigma$ denotes the number of blocks of the non-crossing partition $\sigma$. From the moment formula, one can obtain the value of the following integral, useful in the computation of von Neumann entropies \cite{page}:
\begin{equation}\label{eq:entropy-free-poisson}
\int_{}^{} x \log x \; d\pi_c(x) = 
\begin{cases}
\frac{1}{2} + c \log c \quad & \text{ if } c \geq 1;\\
\frac{c^2}{2} \quad & \text{ if } 0<c<1.
\end{cases}
\end{equation}

\subsection{Application 1: Fixed ancilla space}\label{sec:k-fixed}

The counterexamples to the additivity conjecture obtained so far arise from the random choice of a quantum channel from the ensemble
\begin{equation}\label{eq:model-k-fixed}
\Phi: \M_{m}\to \M_n
\end{equation}
given by $\Phi(X) = \trace_k[U(X \otimes \ketbra{0}{0})U^*]$,  where $U \in \U(nk)$ is a random unitary matrix and $\ketbra{0}{0}$ is an ancilla rank-one projector. 
The counterexamples rely on the idea of using a tensor product of conjugate channels $\Phi \otimes \ol \Phi$ and, more precisely, the output of this channel when the input is the Bell state ($\{e_1, \ldots, e_m\}$ is some fixed basis of $\C^m$):
\begin{equation}
	 E_m = \frac{1}{m} \sum_{i,j=1}^m \ketbra{e_i}{e_j} \otimes \ketbra{e_i}{e_j}.
\end{equation}
In the regime where $m=tnk$ with $n$ and $k$ integers and $t \in (0,1)$ a fixed parameter, we gave in \cite{collins-nechita-1} 
a complete spectral description of the (random) density matrix $[\Phi \otimes \ol \Phi](E_{tnk})$. The following result improves on the previously known bound of Winter, $\lambda_1 \geq t$:

\begin{theorem}
\label{thm:k_fixed}
Almost surely, as $n \to \iy$, the $k^2$ non-zero eigenvalues of the random matrix $[\Phi\otimes\overline\Phi](E_{tnk}) \in \M_{n^2}(\C)$ converge towards the deterministic probability vector
\[\gamma^{(t)} = \left( t + \frac{1-t}{k^2},\underbrace{\frac{1-t}{k^2}, \ldots, \frac{1-t}{k^2}}_{k^2-1 \text{ times}}\right).\]
\end{theorem}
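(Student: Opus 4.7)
The plan is the method of moments. For each integer $p \geq 1$, I would compute
\[ M_p := \E \trace\bigl([\Phi \otimes \ol\Phi](E_{tnk})\bigr)^p \]
and show that, as $n \to \iy$ with $k$ fixed,
\[ M_p \longrightarrow \left(t + \frac{1-t}{k^2}\right)^p + (k^2-1)\left(\frac{1-t}{k^2}\right)^p, \]
which are precisely the moments of the probability vector $\gamma^{(t)}$. Since $\rho := [\Phi \otimes \ol\Phi](E_{tnk})$ has rank bounded by $k^2$ (a constant, independent of $n$, because the Bell state is rank one and each channel can increase the rank by at most a factor of $k$), the non-zero eigenvalues live in a compact subset of a fixed simplex, so convergence of all $M_p$ determines the multiset of limiting eigenvalues uniquely.

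To compute $M_p$, I would encode $\trace \rho^p$ as a closed diagram using the graphical calculus of Section \ref{sec:planar}: each copy of $\Phi$ contributes a $U$ and a $U^*$ box with the ancilla $\ketbra{0}{0}$ attached and the $\C^k$ factor contracted; symmetrically, $\ol\Phi$ contributes an $\ol U$ and a $U^T$ box; the input Bell state $E_{tnk}$ couples the two copies. After raising to the $p$-th power and tracing, there are $2p$ boxes of type $U$ and $2p$ of type $\ol U$. Theorem \ref{thm:Wg_diag} then expresses $M_p$ as a sum over pairs $(\alpha,\beta) \in \S_{2p}^2$ of contributions $\D_{(\alpha,\beta)} \Wg(nk, \alpha\beta^{-1})$. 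Each $\D_{(\alpha,\beta)}$ evaluates to a monomial in $n$, $k$, and $m=tnk$, one factor per closed loop in the reduced diagram, and Theorem \ref{thm:mob} gives the Weingarten prefactor $(nk)^{-(2p+|\alpha\beta^{-1}|)}\Mob(\alpha\beta^{-1})$ at leading order.

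Next I would classify the removals contributing at leading order in $n$. Parametrizing $(\alpha,\beta)$ by how the resulting loops visit the Bell state, the ancilla, and the output strands, each Bell-state loop converts an index summation of dimension $n$ into one of dimension $m=tnk$, producing a factor of $t$, while ancilla loops contribute factors of $k$. Combined with the $(nk)^{-(2p+|\alpha\beta^{-1}|)}$ scaling of Weingarten, the surviving order $n^0$ contributions come from ``geodesic'' pairs $(\alpha,\beta)$ whose loop structure is aligned with the Bell-state coupling. After enumerating these and performing a binomial-type collapse, the dominant contribution should simplify to the closed form for $M_p$ written above.

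The main obstacle will be the exhaustive classification of the leading removals. Winter's classical bound $\lambda_1 \geq t$ recovers only the contribution of $(\alpha,\beta)=(\id,\id)$; extracting the exact value $t+(1-t)/k^2$ and the $(k^2-1)$-fold degeneracy at $(1-t)/k^2$ requires summing over a whole family of non-trivial geodesic removals and verifying that no crossed pairing accidentally matches the leading scaling in $n$. Once convergence in expectation is secured, almost-sure convergence follows from a standard variance estimate $\mathrm{Var}(\trace \rho^p) = O(n^{-2})$ (again a Weingarten computation on the doubled diagram) together with Borel--Cantelli; convergence of the individual non-zero eigenvalues, rather than merely of their empirical distribution, uses that the rank of $\rho$ is uniformly bounded by the constant $k^2$.
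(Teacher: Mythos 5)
Your proposal is correct and follows essentially the same route as the paper (and the reference \cite{collins-nechita-1}, where this theorem was originally proved): the method of moments applied to $\trace\,\rho^p$ via the graphical calculus and Weingarten expansion of Theorems \ref{thm:Wg_diag} and \ref{thm:mob}, identification of the dominating removals $(\alpha,\beta)\in\S_{2p}^2$ for $n\to\iy$ with $k$ fixed, the rank bound $\rk\rho\leq k^2$ to convert moment convergence into convergence of the eigenvalue vector, and variance estimates plus Borel--Cantelli for almost-sure convergence as in Remark~\ref{rem-as}. You correctly flag the one substantial piece of work left implicit, namely the classification of the leading geodesic pairs and the resulting binomial collapse giving $\left(t+\frac{1-t}{k^2}\right)^p + (k^2-1)\left(\frac{1-t}{k^2}\right)^p$; that enumeration is precisely the content carried out in the cited reference (and reappears in the $d=0$ line of Table~\ref{tab:S1} and Proposition~\ref{thm:non-linear} for the special case $t=1/k$).
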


In the particular case $\Phi: \M_n\to \M_n$, corresponding to $kt=1$, the eigenvalues are
$\frac{1}{k} + \frac{1}{k^2} - \frac{1}{k^3}$, with multiplicity one and $\frac{1}{k^2} - \frac{1}{k^3}$, with multiplicity $k^2-1$. 
Not only the value of the largest eigenvalue is improved from $\frac{1}{k}$ to $\frac{1}{k} + \frac{1}{k^2} - \frac{1}{k^3}$, but the lower eigenvalues are also computed.

A better understanding of the spectrum of the output matrix for the product channel yields immediately better bounds for the Minimum Output Entropy of $\Phi \otimes \ol \Phi$. Applications of this result are twofold. First, it allows for violations of additivity of R\'enyi entropies, for all $p>1$, just by using a qubit as an ancilla state for the input ($t=1/2$). Second, the improvements on the bound for the entropy of the product channel can yield better minimum values of $k$ needed to obtain violations of additivity for $p=1$ (see \cite{fukuda-king-moser, fukuda-king}).

\subsection{Application 2: Ancilla and input space of linear  dimensions}\label{sec:k-linear}

Theorem \ref{thm:k_fixed} is highly non-intuitive: it is not clear why the small eigenvalues should all behave in the same way. Moreover, there was numerical evidence \cite{hayden-winter} that the spectrum should not be flat beyond the second eigenvalue. This raises the question of what happens when the ancilla space is not of fixed dimension, but rather of dimension comparable to the input space. 

The study of such asymptotic regimes was initiated in \cite{collins-nechita-3}. After stating the main result obtained in that paper, we shall generalize in Section \ref{sec:linear-gen} the setting, allowing for the dimension of the input space to vary linearly with the dimension of the output. 

In \cite{collins-nechita-3}, Section 6.3, we considered random quantum channels $\Phi:\M_n(\C) \to \M_n(\C)$ obtained from random Haar unitary matrices $U \in \U(nk)$ via the Stinespring representation
\begin{equation}\label{eq:Stinespring_nk}
	\Phi(X) = \trace_k\left[ U (X \otimes P_k) U^* \right],
\end{equation}
where $P_k \in \M_k(\C)$ is a non-random rank-one projector (pure state) and both $n$ and $k$ grow to infinity at a constant ration $k/n \to c >0$. The diagram for such a channel is represented in Figure \ref{fig:quantum_channel_nk}.

\begin{figure}[ht]
\centering
\includegraphics{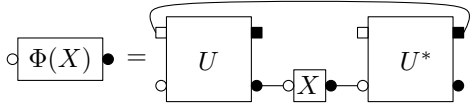}
\caption{Diagram for a quantum channel with equal input and output spaces. The state $P_k$ of the ancilla space is omitted, since it has no role to play in the computations. Round labels attached to boxes correspond to input/output spaces $\C^n$ and square symbols correspond to ancilla spaces $\C^k$.}
\label{fig:quantum_channel_nk}
\end{figure}

For the regime where both $n$ and $k$ grow to infinity at a constant ratio $c$, the main result of \cite{collins-nechita-3} is as follows.

\begin{theorem}
\label{thm:conjugate}
Consider a pair of conjugate random quantum channels $\Phi, \ol \Phi$ in the regime where $n, k \to \iy$, $k \sim cn$. The eigenvalues $\lambda_1 \geq \cdots \geq\lambda_{n^2}$ of the random matrix $Z_{n} = [\Phi\otimes\overline\Phi](E_{n})$ are such that:
\begin{itemize}\addtolength{\itemsep}{-0.5\baselineskip}
\item
The first eigenvalue satisfies $cn\lambda_1\to 1$ (in probability).
\item
The distribution $\frac{1}{n^2-1}\sum_{i=2}^{n^2}\delta_{c^2n^2\lambda_i}$ converges a.s. to a free Poisson distribution of parameter $c^{2}$.
\end{itemize}
\end{theorem}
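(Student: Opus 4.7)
The plan is to determine the limiting spectrum of $Z_n$ via the method of moments, computing $\E[\trace(Z_n^p)]$ for each fixed $p \geq 1$ and showing that the $p$-th moment splits into a ``spike'' contribution and a ``bulk'' contribution matching the moments of a free Poisson law. First I would apply the graphical Weingarten expansion of Theorem \ref{thm:Wg_diag} to the diagram representing $\trace(Z_n^p)$. Since each copy of $Z_n$ contains two $U$-boxes and two $\bar U$-boxes (one pair from $\Phi$ and one from $\bar\Phi$), the expansion produces a sum over pairs of permutations $(\alpha, \beta) \in \S_{2p} \times \S_{2p}$. Each removed diagram $\D_{(\alpha,\beta)}$ is a disjoint union of loops living in either a copy of $\C^n$ or a copy of $\C^k$, so it contributes $n^{a(\alpha,\beta)} k^{b(\alpha,\beta)}$, while the $p$ Bell inputs $E_n$ carry an additional scalar factor $n^{-p}$. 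Combined with the asymptotics $\Wg(nk, \alpha\beta^{-1}) \sim (nk)^{-(2p + |\alpha\beta^{-1}|)} \Mob(\alpha\beta^{-1})$ from Theorem \ref{thm:mob}, each term acquires a determined power of $n$ under the scaling $k \sim cn$.

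Next I would carry out the topological analysis of which pairs $(\alpha, \beta)$ saturate the exponent in $n$. As in \cite{collins-nechita-1, collins-nechita-3}, a geodesic-type inequality of the form $a + b \leq 2p + |\alpha\beta^{-1}|$ should hold, and the dominant removals split into two classes. The first class produces a single contribution corresponding to the distinguished eigenvector obtained by applying $U \otimes \bar U$ to the Bell vector over the whole space $\C^{nk} \otimes \C^{nk}$ and then partial-tracing the ancilla; this is essentially Winter's observation and yields a spike whose weight converges to $1/k \sim 1/(cn)$. The remaining dominant removals are expected to be indexed by non-crossing partitions $\sigma \in NC(p)$, each contributing $c^{2\# \sigma}$ after the bulk rescaling by $(c^2 n^2)^p$; summed against the M\"obius weights, this matches the free moment-cumulant formula recalled in Section \ref{sec-application-wishart} and produces exactly the $p$-th moment of the free Poisson measure $\pi_{c^2}$.

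For the convergence statements, I would upgrade the moment convergence to almost sure convergence of the bulk empirical measure $\frac{1}{n^2-1}\sum_{i=2}^{n^2} \delta_{c^2 n^2 \lambda_i}$ by bounding $\mathrm{Var}[\trace(Z_n^p)] = O(n^{-2})$ via a second Weingarten expansion and concluding with a Borel--Cantelli argument. The statement $cn\lambda_1 \to 1$ in probability would follow from the Winter-type deterministic lower bound $\lambda_1 \geq 1/k$, a matching upper bound extracted from the first-moment computation (which detects the spike), and standard Lipschitz concentration for eigenvalues on $\U(nk)$.

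The main obstacle is the combinatorial step: showing that, after subtracting the spike, the remaining dominant removals are in bijection with $NC(p)$ and collectively produce precisely the weight $c^{2 \# \sigma}$. This requires careful tracking of how loops close in $\C^n$ versus in $\C^k$ for each Wick pairing, as well as identifying where the extra factor of $c$ (compared with a single-channel Marchenko--Pastur $\pi_c$) comes from. It is at this step that the product structure of $\Phi \otimes \bar\Phi$ manifests itself, effectively doubling the combinatorial count and turning the free Poisson parameter $c$ that would appear for $\Phi$ alone into $c^2$.
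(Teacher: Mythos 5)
Your proposal has a genuine gap in the central step: you plan to extract the bulk (free Poisson) part of the spectrum from the direct moment expansion of $\E[\trace(Z_n^p)]$ by ``subtracting the spike.'' This cannot work with the plain moment method, and the paper is explicit about why. The spike eigenvalue lives at scale $n^{-1}$, while the $n^2-1$ bulk eigenvalues live at scale $n^{-2}$. At the scaling that makes the spike contribution $O(1)$, namely $\E[\trace((cn Z_n)^p)]$, the bulk contributes at most $O(n^{2-p})$, so for every $p \geq 3$ it vanishes asymptotically (Proposition~\ref{thm:asympt-Z} records exactly this: the rescaled $p$-th moment tends to $1$ for $p \geq 3$, with the bulk visible only at $p=2$). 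At the scaling that makes the bulk $O(1)$, namely $(c^2 n^2)^p$, the spike term blows up as $n^p$ and drowns the sum. There is no common scaling at which both parts appear with comparable weight, so you never see the non-crossing-partition sum you need. Moreover you cannot simply subtract the spike because you only know $\lambda_1$ asymptotically; its fluctuations are of the same order as the entire bulk contribution to the moments, so the ``remainder'' is not controlled.

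The mechanism the paper (following \cite{collins-nechita-3}) actually uses is the compression trick: introduce $Q = \I - E_n$, the orthogonal projection onto the complement of the Bell state, and compute the moments of $c^2 n^2 \, Q Z_n Q$ directly via the graphical Weingarten calculus. Developing $\trace[(QZQ)^p]$ produces a sum over $2^p$ choice functions $f \colon \{1,\dots,p\} \to \{\I, E\}$, and the key technical point is to show that the high-order terms cancel across choice functions (the class of permutations $\V$ with $\alpha\delta$ having a fixed point), after which the surviving geodesic terms reproduce the free moment--cumulant formula and give $\pi_{c^2}$. One then transfers back from $QZQ$ to $Z$ using Cauchy's interlacing theorem, which is what justifies the empirical measure of $\lambda_2, \dots, \lambda_{n^2}$ in the conclusion. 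Your topological analysis, the identification of the spike with Winter's Bell-state phenomenon, the variance bound with Borel--Cantelli for almost sure convergence, and the concentration argument for $\lambda_1$ are all in line with the paper; what is missing is the realization that you must first project out the spike before the moment method can see the free Poisson bulk at all.
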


Here we see that a new phenomenon of two different convergence rates for eigenvalues appears. This is due to the fact that  the $U-\overline U$ model for the product channel contains the ``conjugation'' symmetry. Instead of considering a channel and its complex conjugate, we looked in \cite{collins-nechita-3} at two \emph{independent} quantum channels, taken from the same ensemble. 

\begin{theorem}
\label{thm:independent}
In the regime $k \sim cn$, $n \to \iy$, let $Z_{n} = [\Phi\otimes\Psi](E_{n})$ be the output of the product of two independent quantum channels $\Phi$ and $\Psi$, when the input is a maximally entangled state $E_n$. Then, almost surely, the distribution of the rescaled output matrix $c^2n^2Z_n$ converges towards a free Poisson of parameter $c^2$. 
\end{theorem}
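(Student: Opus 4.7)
The plan is to compute the moments $\mathbb{E}\bigl[\trace\bigl((c^2 n^2 Z_n)^p\bigr)\bigr]$ (normalized by $n^2$) using the graphical calculus of Section \ref{sec:planar} and the Weingarten formula of Theorem \ref{thm:Wg_diag}, match the leading term against the free moment--cumulant formula recalled in Section \ref{sec-application-wishart}, and then promote convergence in expectation to almost sure convergence by a standard variance estimate. Concretely, the target is to show that, in expectation,
\[
\frac{1}{n^2}\sum_{i=1}^{n^2} (c^2 n^2 \lambda_i)^p \;\longrightarrow\; \sum_{\sigma \in NC(p)} (c^2)^{\#\sigma} \;=\; \int x^p\, d\pi_{c^2}(x).
\]

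First I would draw the diagram for $\trace(Z_n^p)$ as in Figure \ref{fig:quantum_channel_nk}, with $p$ copies each of $U, \overline{U}$ (from $\Phi$) and $p$ copies each of $V, \overline{V}$ (from the independent channel $\Psi$), interleaved according to $p$ Bell wires and $p$ trace wires on the input/output ($\C^n$, round) legs, and $p$ rank-one ancilla projectors on each of the $\C^k$ (square) legs. Since $U$ and $V$ are independent, the Haar expectation factorizes and Theorem \ref{thm:Wg_diag} applies twice: I obtain a quadruple sum over $(\alpha,\beta,\alpha',\beta') \in \mathcal{S}_p^4$ with weight $\Wg(nk,\alpha\beta^{-1})\Wg(nk,\alpha'\beta'^{-1})$, multiplied by the value of the resulting ``removal'' diagram $\mathcal{D}_{r,r'}$, which is a pure product of loops of dimensions $n$ and $k$.

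Second, I would compute the loop count. The $n$--loops come from the round wires and are controlled by the cycle structure of products such as $\alpha\gamma$ and $\alpha'\gamma$ where $\gamma$ is the $p$-cycle coming from the combined Bell/trace structure on the $\C^n$ side; the $k$--loops come from the ancilla side and depend on $\beta$, $\beta'$ alone through cycle counts like $\#\beta + \#\beta'$. Using the leading Weingarten asymptotics $\Wg(nk,\sigma) \sim (nk)^{-(p+|\sigma|)}\Mob(\sigma)$ from Theorem \ref{thm:mob} together with $k \sim cn$, each quadruple $(\alpha,\beta,\alpha',\beta')$ contributes a monomial in $n$ whose exponent is controlled by the geodesic defects $|\alpha\beta^{-1}|$, $|\alpha'\beta'^{-1}|$ and length inequalities among cycle counts. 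The saturation of the triangle inequalities forces $\beta=\alpha$, $\beta'=\alpha'$, and then $\alpha = \alpha'$ lying on the geodesic between $\id$ and $\gamma$, i.e. $\alpha$ is the Kreweras complement of a non-crossing partition of $\{1,\dots,p\}$. Counting the surviving powers of $c$ from the $k$-loops shows that each contributing permutation produces exactly $(c^2)^{\#\sigma}$, reproducing the free Poisson moments of parameter $c^2$.

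Third, to upgrade to almost sure convergence I would estimate $\mathrm{Var}\bigl(\trace(Z_n^p)\bigr)$ by a parallel Weingarten expansion of $\mathbb{E}\bigl[\trace(Z_n^p)^2\bigr]$ and check that the disconnected diagrams cancel the square of the mean at leading order, leaving an $O(n^{-2})$ remainder; Borel--Cantelli then delivers almost sure convergence of moments, hence of the empirical eigenvalue distribution since the free Poisson is determined by its moments. The main obstacle in this plan is the combinatorial bookkeeping of Step~2/3: isolating why the ``conjugation symmetry'' of Theorem \ref{thm:conjugate}, which in the $U$--$\overline{U}$ case produces an extra admissible pairing responsible for the isolated eigenvalue $\lambda_1 \sim 1/(cn)$, is absent here. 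In the independent $U$, $V$ model the pairings between $U$-boxes and $\overline{V}$-boxes are forbidden by Theorem \ref{thm:Wg_diag} (equation \eqref{eq:Wg_diff} applied to each unitary separately), so the outlier disappears and the whole spectrum collapses onto the Marchenko--Pastur bulk at the $c^2n^2$ scale.
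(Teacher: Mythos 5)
The paper does not prove Theorem~\ref{thm:independent} here --- it is quoted from \cite{collins-nechita-3} --- so there is no in-paper argument to compare against; but your plan follows exactly the methodology that the paper uses for its cousins (Propositions~\ref{thm:asympt-Z}, \ref{thm:asmpt-QZQ}, \ref{prop:asymptotics-QZQ-01}): exact Weingarten expansion of the moments, asymptotic minimization of the exponent of $n$, identification of the surviving non-crossing sum with a free Poisson moment, and Borel--Cantelli on the $O(n^{-2})$ covariance as in Remark~\ref{rem-as}.

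Two remarks on the details. First, your diagnosis of why the outlier vanishes is the right one but can be phrased more structurally: with $U,V$ independent, Theorem~\ref{thm:Wg_diag} is applied to each unitary separately, so the effective removal permutation in $\mathcal{S}_{2p}$ is constrained to be \emph{block-diagonal} with respect to the top/bottom split; the permutation $\delta$ which produces the $n^{-1}$ outlier in Proposition~\ref{thm:asympt-Z} (it interchanges $i^T\leftrightarrow i^B$) is precisely the one excluded by this constraint, and the minimum of the exponent over block-diagonal pairs drops from $p$ to $2p-2$. Second, when you close the minimization, the triangle inequalities force $\beta=\alpha$, $\beta'=\alpha'$, both geodesic, \emph{and} a coupling between the two blocks through the Bell-wire loop count (the analogue of $|\beta\delta|$); with the labeling conventions of \eqref{eq:def-gamma-delta}, where the bottom cycle $\gamma$ is the inverse of the top one, this coupling reads $\alpha'=\alpha^{-1}$ rather than $\alpha'=\alpha$. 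This does not change the conclusion, because $\#\alpha^{-1}=\#\alpha$, so the $k$-loop factor still contributes $(c^2)^{\#\alpha}$ and the surviving sum is over a single copy of $NC(p)$ giving $\sum_{\sigma\in NC(p)}(c^2)^{\#\sigma}=\int x^p\,d\pi_{c^2}(x)$, as you assert --- but you should verify this relation rather than assume $\alpha=\alpha'$ outright, since a mistaken identification would double-count and yield Fuss--Catalan rather than Catalan statistics.
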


A striking feature of this asymptotic regime is that the von Neumann entropies of the $U-V$ and $U-\overline U$ models are almost the same. It is then natural to ask whether the $U-\overline U$ symmetry is in fact needed to obtain violations of the additivity. Indeed, it seems that the largest eigenvalue for the output of the product channel does not play a big role in the bounds for the entropy. Having counterexamples with independent channels will be an important step to a better understanding of additivity violations. Possible violations with independent channels would be generic, as opposed to conjugate-channels violations which are not. Also, these considerations give concrete hope that larger violations of additivity could be achieved. 

\section{Generalized linear setting --- input and output spaces of different dimension}\label{sec:linear-gen}

In this section,
we generalize the model of quantum channels we have considered, by allowing input and output spaces of different dimensions. 
We consider random quantum channels $\Phi:\M_m(\C) \to \M_n(\C)$ defined by 
\begin{equation}
	\Phi(X) = \trace_k \left[ U (X \otimes P_l) U^* \right],
\end{equation}
where $l = nk / m$ and $P_l$ is a deterministic rank-one projector in $\M_l(\C)$; we tacitly assume that $l$ is an integer. All three dimensions $m,n$ and $k$ grow to infinity, at constant ratios: $m/n \to b$ and $k/n \to c$, with $b,c\in (0,\infty)$ fixed constants. 
The generalized diagram corresponding to $\Phi$ ias depicted in Figure \ref{fig:quantum_channel_mnk}.

\begin{figure}[ht]
\centering
\includegraphics{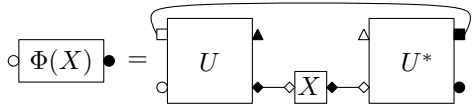}
\caption{Diagram for a quantum channel with different input and output spaces. Round labels attached to boxes correspond to output spaces $\C^n$, square symbols correspond to ancilla spaces $\C^k$ and diamonds correspond to input spaces $\C^m$. The rank-one projector $P_l$ is omitted.}
\label{fig:quantum_channel_mnk}
\end{figure}

When presented with the maximally entangled (or Bell state) $E_{m}\in\M_{m^{2}}(\C)$ as an input, the product conjugate channel $\Phi \otimes \ol \Phi$ produces a random density matrix 
\begin{equation}
	Z = [\Phi \otimes \ol \Phi] (E_{m}) \in \M_{n^2}(\C). 
\end{equation}

The remaining of this section is dedicated to the study of the random matrix $Z$, depicted in Figure \ref{fig:quantum_channel_product_mnk}. The analysis of the spectrum of $Z$ follows closely corresponding results in \cite{collins-nechita-3}, Section 6.3, which is a specialization of this section, in the case $m=n$ (i.e. $b=1$). The spectral properties of the output random matrix $Z$ are summarized in Theorem \ref{thm:mnk}, the main result of this section. The reader in invited to compare the conclusions of Theorems \ref{thm:conjugate} and \ref{thm:mnk}.

\begin{figure}[ht]
\centering
\includegraphics{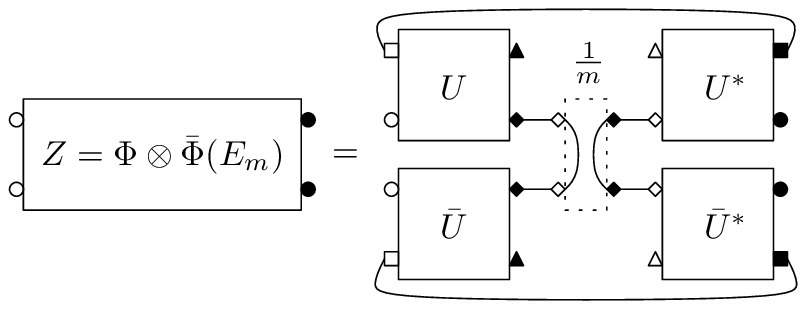}
\caption[d]{Diagram for the output of a product of two conjugate channels, when the input is the maximally entangled state. The complex Hilbert spaces associated to labels are as follows: $\includegraphics{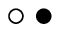}  \leadsto \C^n$, $\includegraphics{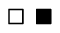} \leadsto \C^k$, $\includegraphics{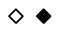} \leadsto \C^m$ and $\includegraphics{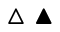} \leadsto \C^l$.}
\label{fig:quantum_channel_product_mnk}
\end{figure}

The first step of our analysis of the output matrix is the computation of the asymptotic moments. 

\begin{proposition}
\label{thm:asympt-Z}
Consider a sequence of random quantum channels $\Phi_{n}$ where $m,n,k \to \iy$, $m/n \to b$ and $k/n \to c$. The asymptotic moments of the output matrix $Z=[\Phi\otimes \ol \Phi ] (E_{m})$ are given by:
\begin{align*}
\trace\left(Z\right) &= 1;\\
\E\trace\left[\left(\frac{c}{b}nZ\right)^2\right] &= 1+ \frac{1}{b^2} + \frac{c^2}{b^2} +o(1);\\
\E\trace\left[\left(\frac{c}{b}nZ\right)^p\right] &= 1+o(1), \quad \forall p \geq 3.
\end{align*}
\end{proposition}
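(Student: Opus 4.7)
The plan is to apply the graphical Weingarten calculus (Theorem~\ref{thm:Wg_diag}) directly to the diagram representing $\trace(Z^p)$, obtained by cyclically composing $p$ copies of the diagram in Figure~\ref{fig:quantum_channel_product_mnk}. The identity $\trace(Z)=1$ is deterministic: $\Phi$ and $\ol\Phi$ are trace-preserving quantum channels and $\trace(E_m) = 1$, so $Z$ is always a density matrix.

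For $p \geq 2$, each copy of $Z$ is quartic in the unitary entries, with two boxes of type $U$ and two of type $\ol U$ (one pair from $\Phi$ and one from $\ol \Phi$), so $\trace(Z^p)$ contains $2p$ copies of each type. Theorem~\ref{thm:Wg_diag} then yields
\begin{equation*}
\E \trace(Z^p) = m^{-p} \sum_{(\alpha,\beta) \in \S_{2p} \times \S_{2p}} n^{L_n(\alpha,\beta)} \, m^{L_m(\alpha,\beta)} \, k^{L_k(\alpha,\beta)} \, \Wg(nk,\,\alpha\beta^{-1}),
\end{equation*}
where $L_n, L_m, L_k$ count the loops of the removed diagram $\D_{(\alpha,\beta)}$ in the round ($\C^n$), diamond ($\C^m$) and square ($\C^k$) colors; the triangle wires ($\C^l$) all close on the rank-one projector $P_l$ and contribute a factor of $1$ each, while the prefactor $m^{-p}$ collects the normalizations of the $p$ Bell-state inputs.

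Substituting $m = bn(1+o(1))$, $k = cn(1+o(1))$, and the asymptotic expansion $\Wg(nk,\sigma) \sim (nk)^{-2p-|\sigma|}\Mob(\sigma)$, each removal contributes a power of $n$ times an explicit rational function of $b$ and $c$. A geodesic-type inequality, of the kind established in Section~6.3 of~\cite{collins-nechita-3} for the case $b=1$, controls the power of $n$ so that the rescaled moment $\E \trace[((c/b)nZ)^p]$ is $O(1)$, and the problem reduces to enumerating the saturating removals. For $p=2$ I expect three equivalence classes of saturating removals: a ``Winter'' removal tied to the large eigenvalue of $Z$ contributing $1$, a ``trace-like'' removal contributing $b^{-2}$, and a ``twisted'' removal coming from the $\Phi$--$\ol\Phi$ conjugation symmetry contributing $c^2 b^{-2}$. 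For $p \geq 3$, the Winter removal is the unique saturating one and contributes $1$, while every other removal is suppressed by at least an additional factor of $n^{-1}$, yielding $\E \trace[((c/b)nZ)^p] = 1 + o(1)$.

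The principal obstacle is the loop-counting for the diamond color: $L_m(\alpha,\beta)$ depends subtly on how $\alpha$ and $\beta$ interleave the $\Phi$- and $\ol\Phi$-sides of the Bell-state wires, and this is precisely where the new parameter $b$ enters, distinguishing the present setting from the $b=1$ case of~\cite{collins-nechita-3}. Once the three loop counts $L_n, L_m, L_k$ are understood, the geodesic inequality and its saturation conditions transfer essentially verbatim, reducing the argument to a finite combinatorial check for $p = 2$ and to an extremality statement (the Winter removal is the unique maximizer) for $p \geq 3$.
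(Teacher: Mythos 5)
Your approach coincides with the paper's: the exact moment formula from the graphical Weingarten calculus takes exactly the form you describe, with $L_n(\alpha,\beta)=\#(\alpha\gamma^{-1})$, $L_m(\alpha,\beta)=\#(\beta\delta)$, $L_k(\alpha,\beta)=\#\alpha$ (this is equation \eqref{eq:bi_canal_mnk_exact}), and the paper likewise defers the geodesic minimization to Theorem 6.8 of \cite{collins-nechita-3}, observing as you do that $b$ affects only the contributions at the minimizers and not their location. One small slip in your $p=2$ bookkeeping: the saturating pairs are $\alpha=\beta\in\{\id,\delta,\gamma\}$, with $\delta$ giving $1$, $\gamma$ giving $b^{-2}$, and $\id$ giving $c^2b^{-2}$ (you appear to have swapped the latter two names), though the total $1+b^{-2}+c^2b^{-2}$ is of course correct.
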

\begin{proof}
The starting point of the proof is the following exact formula for the moments of the random matrix $Z$, obtained from the graph expansion technique detailed in \cite{collins-nechita-1}:
\begin{equation}\label{eq:bi_canal_mnk_exact}
	\E[\trace(Z^p)] = \sum_{\alpha, \beta \in \S_{2p}}k^{\# \alpha}n^{\#(\alpha \gamma^{-1})}
	m^{\# (\beta\delta) - p}\Wg(\alpha\beta^{-1}),
\end{equation}
where the permutations $\gamma, \delta \in \S_{2p}$ are defined as follows. We relabel the index set $\{1, 2, \ldots 2p\}$ as $\{p^B, \ldots, 1^B, 1^T, 2^T, \ldots, p^T\}$ in order to make precise the association of indices with blocks corresponding to the ``top'' channel $\Phi$ ($1^T, \ldots, p^T$) and blocks corresponding to the ``bottom'' channel $\ol \Phi$ ($1^B, \ldots, p^B$). With this notation, we define the permutations
\begin{equation}\label{eq:def-gamma-delta}
\gamma(i^T) = (i-1)^T, \quad \gamma(i^B) = (i+1)^B\quad
\text{ and } \quad
\delta(i^T) = i^B, \quad \delta(i^B) = i^T.
\end{equation}
Using the asymptotic expressions for the dimensions $m \sim bn$, $k \sim cn$ and for the Weingarten function 
(see Theorem \ref{thm:mob}) 
\begin{equation}
	\Wg(\alpha\beta^{-1}) \sim (nk)^{-2p - |\alpha\beta^{-1}|}\Mob(\alpha\beta^{-1}) \sim c^{-2p - |\alpha\beta^{-1}|} n^{-4p - 2|\alpha\beta^{-1}|}\Mob(\alpha\beta^{-1}),
\end{equation}
equation \eqref{eq:bi_canal_mnk_exact} becomes 
\begin{equation}\label{eq:bi_canal_mnk_asympt}
	\E[\trace(Z^p)] \sim \sum_{\alpha, \beta \in \S_{2p}} c^{-(|\alpha|+|\alpha\beta^{-1}|)}b^{p-|\beta\delta|}n^{-f(\alpha,\beta)}\Mob(\alpha\beta^{-1}),
\end{equation}
where the function $f(\alpha, \beta)$ is given by
\begin{equation}
f(\alpha, \beta) = |\alpha|+|\alpha\gamma^{-1}|+|\beta\delta|+2|\alpha\beta^{-1}| - p.
\end{equation}

In order to find the dominating terms in the sums \eqref{eq:bi_canal_mnk_exact} or \eqref{eq:bi_canal_mnk_asympt}, one has to minimize the quantity $f(\alpha, \beta)$ over the permutation group $\S_{2p}$. This has been done in the proof of Theorem 6.8 of \cite{collins-nechita-3}:
\begin{itemize}
\item for $p=1$, $f(\alpha, \beta) \geq 0$, with equality iff. $\alpha=\beta=\id$;
\item for $p=2$, $f(\alpha, \beta) \geq p$, with equality iff. $\alpha=\beta \in \{\id, \delta, \gamma\}$;
\item for $p \geq 3$, $f(\alpha, \beta) \geq p$, with equality iff. $\alpha=\beta=\delta$.
\end{itemize}
One concludes now by plugging the optimal values for the permutations $\alpha$ and $\beta$ into equation \eqref{eq:bi_canal_mnk_asympt}.
\end{proof}

Theorem \ref{thm:asympt-Z} only gives a partial description of the spectrum of the random matrix $Z$; from the moment information one can deduce that there are some eigenvalues on the scale $n^{-1}$ and that the rest of the spectrum is distributed on lower scales, such as $n^{-2}$. Obtaining information on the lower scale eigenvalues 
by brute force
via the moment method is a difficult task, since their asymptotic contribution is negligible with respect to the contribution of the eigenvalue(s) on the scale $n^{-1}$. 
The trick we using to obtain information about the smaller eigenvalues is inspired by Hayden and Winter's proof of the existence of a \emph{large} eigenvalue. Their proof contains, as a byproduct, some information on the eigenvector for the large eigenvalue. We introduce the orthogonal projection $Q = \I - E$, where $E=E_n$
is the maximally entangled state on a product of two copies of the output space $\C^n \otimes \C^n$. Using the (rank $n^2-1$) projector $Q$, we shall obtain some information on the smallest $n^2-1$ eigenvalues of the random output matrix $Z$, by analyzing the ``compressed'' matrix $QZQ$ (which, in a suitable basis, corresponds to considering the $(1,1)$ minor of $Q$).

\begin{proposition}\label{thm:asmpt-QZQ}
Almost surely, the matrix $c^2n^2QZQ$ converges in distribution, to a free Poisson (or Marchenko-Pastur) law of parameter $c^2$.
\end{proposition}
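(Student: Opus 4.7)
The plan is the moment method. I would show that for each fixed $p\geq 1$,
\begin{equation*}
    \frac{1}{n^2}\,\E\trace\bigl[(c^2n^2\,QZQ)^p\bigr] \xrightarrow[n\to\infty]{} \sum_{\sigma\in NC(p)} c^{2\#\sigma} = \int x^p\,d\pi_{c^2}(x),
\end{equation*}
and then upgrade to almost sure convergence of the empirical spectral distribution by a standard variance estimate.

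The first step is to remove the projectors $Q$ from the trace by expansion. Since $Q^2=Q$ and the trace is cyclic, $\trace[(QZQ)^p] = \trace[(QZ)^p]$; writing $Q = I_{n^2} - E_n$ then produces a signed sum
\begin{equation*}
    \trace[(QZ)^p] = \trace(Z^p) + \sum_{\emptyset\neq S\subseteq[p]}(-1)^{|S|}\prod_{j=1}^{|S|}\bra{\psi_n}Z^{a_j(S)}\ket{\psi_n},
\end{equation*}
where $\ket{\psi_n} = n^{-1/2}\sum_i\ket{ii}$ satisfies $E_n = \ketbra{\psi_n}{\psi_n}$ and the $a_j(S)$ are the cyclic gaps of $S$ in $[p]$. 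The rank-one structure of $E_n$ factorises each trace into a product of Bell matrix elements, whose joint expectation will be handled in one pass.

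I would then apply the graphical Weingarten calculus to compute these joint expectations simultaneously. Each factor $\bra{\psi_n}Z^{a}\ket{\psi_n}=\trace(E_n Z^{a})$ admits an expansion of the same shape as formula~(3.7) in Proposition~\ref{thm:asympt-Z}, with the cyclic trace permutation $\gamma$ replaced by a modified permutation $\gamma_S$ that encodes the ``cap'' reconnections of round (output-space) wires produced by the $E_n$ insertions, together with a factor $n^{-|S|}$ from their normalisations. Using $m\sim bn$, $k\sim cn$ and the Weingarten asymptotics, each triple $(\alpha,\beta,S)\in\S_{2p}^{\,2}\times 2^{[p]}$ contributes on scale $n^{-F_S(\alpha,\beta)}$ with
\begin{equation*}
    F_S(\alpha,\beta) = |\alpha|+|\alpha\gamma_S^{-1}|+|\beta\delta|+2|\alpha\beta^{-1}|+|S|-p.
\end{equation*}
The crucial phenomenon---the very point of compressing by $Q$---is that the family of triples dominating $\E\trace(Z^p)$ in Proposition~\ref{thm:asympt-Z} (namely $\alpha=\beta=\delta$, $S=\emptyset$, responsible for the ``large eigenvalue'' contribution $(b/cn)^p$ when $p\geq 3$) cancels exactly against a corresponding family of non-empty-$S$ triples with the same $(\alpha,\beta)$. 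After this cancellation, the surviving leading stratum sits at scale $n^{-(2p-2)}$ and should be in weight-preserving bijection with $NC(p)$, each block of size $s$ contributing a factor $c^{2}$, yielding $\E\trace[(c^2n^2\,QZQ)^p] = n^2\sum_{\sigma\in NC(p)} c^{2\#\sigma} + o(n^2)$.

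To promote this to almost sure convergence, I would bound the variance by running the analogous graphical computation on $\trace[(c^2n^2\,QZQ)^p]^{\,2}$; the higher-genus constraints force it to be $O(n^{-2})$, which is summable in $n$, so Borel--Cantelli gives a.s.\ convergence of each moment, and moment-determinacy of the compactly supported $\pi_{c^2}$ then implies a.s.\ weak convergence of the empirical spectral distribution. The main obstacle will be the combinatorial analysis in the third step: tracking how the modified permutations $\gamma_S$ interact with the length function on $\S_{2p}$, exhibiting the cancellation of the dominant stratum, and identifying the next stratum with $NC(p)$ weighted by $c^{2\#\sigma}$. That the $b$-dependence must drop out at this leading order is a natural sanity check, consistent with $\pi_{c^2}$ not depending on $b$.
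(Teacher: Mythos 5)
Your proposal is correct and follows essentially the same route as the paper (and as Theorem~6.9 of \cite{collins-nechita-3}, which the paper cites \emph{mutatis mutandis}): the moment method, expansion of $Q=\I-E_n$ into a signed sum indexed by subsets of $[p]$ (equivalently, choice functions $f:[p]\to\{\I,E\}$), graphical Weingarten calculus with modified ``trace'' permutations encoding the $E_n$ insertions, cancellation of the stratum responsible for the large eigenvalue, identification of the surviving leading stratum with $NC(p)$ weighted by $c^{2\#\sigma}$, and a summable variance bound plus Borel--Cantelli for almost sure convergence; one small organizational difference worth noting is that the paper arranges the cancellation as the vanishing of the full sum over $f$ for every fixed $\alpha$ in the set $\V$ of permutations for which $\alpha\delta$ has a fixed point, which is a somewhat cleaner bookkeeping than cancelling the single dominant $S=\emptyset$ term against its non-empty-$S$ partners, but the two are equivalent.
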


\begin{remark}\label{rem-as}
By `almost surely', we mean with probability one, in any probability space on which the whole sequence (indexed by the input dimension) 
of random quantum channels is defined. In this paper, we supply no proofs of almost sure convergence results, as they require further 
-not so enlightening- technicalities. We refer the interested reader to the appendix of  \cite{collins-nechita-3} for details. Let us just mention
that the proofs rely on the Borel-Cantelli lemma. More precisely, one proves that the covariance of any moment
behaves as $O(n^{-2})$ as the dimension goes to infinity. The fact that $O(n^{-2})$ is summable over $n$ makes it possible to use
the Borel-Cantelli lemma. 
\end{remark}

To prove this result, one can use the arguments of Theorem 6.9 from \cite{collins-nechita-3} \emph{mutatis mutandis}. The method of moments is employed by computing the moments of the random matrix $c^2n^2QZQ$ and showing that they converge to the corresponding moments of the free Poisson distribution of parameter $c^2$:
\begin{equation}
	\lim_{n \to \iy} \frac{1}{n^2}\E[\trace(c^2n^2QZQ)^p] = \int x^p d\pi_{c^2}(x).
\end{equation}

The reader can note that the parameter $b$ (describing the size of the entangled input) has no influence on the lower part of the spectrum. It only appears in the expression of the largest eigenvalue of $Z$, as stated in the following Theorem, which is the main result of this section.

\begin{theorem}
\label{thm:mnk}
Consider a pair of conjugate random quantum channels $\Phi, \ol \Phi$ in the regime where $m, n, k \to \iy$, $m \sim bn$ and $k \sim cn$. The eigenvalues $\lambda_1 \geq \cdots \geq\lambda_{n^2}$ of the random matrix $Z_{n} = [\Phi\otimes\overline\Phi](E_m)$ are such that:
\begin{itemize}\addtolength{\itemsep}{-0.5\baselineskip}
\item
The first eigenvalue satisfies $(c/b)n\lambda_1\to 1$ in probability.
\item
The distribution $\frac{1}{n^2-1}\sum_{i=2}^{n^2}\delta_{c^2n^2\lambda_i}$ converges almost surely to a free Poisson distribution of parameter $c^{2}$.
\end{itemize}
\end{theorem}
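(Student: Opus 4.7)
The plan is to split the theorem into two parts corresponding to the two scales on which the spectrum of $Z_n$ lives. Proposition \ref{thm:asmpt-QZQ} already handles the bulk of the spectrum, namely the $n^2-1$ eigenvalues of order $n^{-2}$, while Proposition \ref{thm:asympt-Z} constrains the single outlier of order $n^{-1}$ through its trace moments.

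The second bullet is obtained by passing from $QZ_nQ$ to $Z_n$ via Cauchy interlacing. Since $Q = \I - E_n$ is an orthogonal projection of rank $n^2 - 1$, the eigenvalues $\mu_1 \geq \cdots \geq \mu_{n^2-1}$ of $QZ_nQ$ (restricted to its range) and the eigenvalues $\lambda_1 \geq \cdots \geq \lambda_{n^2}$ of $Z_n$ satisfy $\lambda_{i+1}(Z_n) \leq \mu_i \leq \lambda_i(Z_n)$. Consequently the empirical measures
\[
\frac{1}{n^2-1}\sum_{i=2}^{n^2}\delta_{c^2 n^2 \lambda_i(Z_n)} \quad\text{and}\quad \frac{1}{n^2-1}\sum_{i=1}^{n^2-1}\delta_{c^2 n^2 \mu_i}
\]
differ by at most $O(n^{-2})$ in Kolmogorov distance, and hence share the same weak limit $\pi_{c^2}$ provided by Proposition \ref{thm:asmpt-QZQ}.

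For the first bullet, set $\mu_n := (c/b)n\lambda_1(Z_n)$ and use the decomposition
\[
\trace\!\bigl[((c/b)nZ_n)^p\bigr] = \mu_n^p + ((c/b)n)^p \sum_{i=2}^{n^2} \lambda_i(Z_n)^p.
\]
By interlacing and Proposition \ref{thm:asmpt-QZQ} one has $\sum_{i \geq 2}\lambda_i(Z_n)^p \leq \trace((QZ_nQ)^p) = O(n^{2-2p})$, so the rescaling factor $((c/b)n)^p$ turns the second term into $O(n^{2-p})$, which tends to $0$ in probability for every fixed $p \geq 3$. On the other hand Proposition \ref{thm:asympt-Z} asserts $\E\trace[((c/b)nZ_n)^p]\to 1$ for $p\geq 3$. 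Combining the two yields $\mu_n^p \to 1$ in probability for every such $p$, and the continuous mapping theorem applied to $x \mapsto x^{1/p}$ on $[0,\infty)$ then gives $\mu_n \to 1$, proving the first bullet.

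The main technical step, indicated in Remark \ref{rem-as}, is upgrading the convergence in expectation of Proposition \ref{thm:asympt-Z} to convergence in probability for the random variable $\trace[((c/b)nZ_n)^p]$. This is handled by a further Weingarten expansion estimating the covariance of this trace, which is of order $O(n^{-2})$, exactly as in the arguments of Section 6.3 of \cite{collins-nechita-3}. Chebyshev's inequality then yields the required convergence in probability, and a standard Borel--Cantelli argument upgrades the statement of the second bullet to almost sure convergence, as promised.
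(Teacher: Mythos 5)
Your proposal follows essentially the same route as the paper, which in fact just says the theorem is obtained "by combining'' Propositions \ref{thm:asympt-Z}, \ref{thm:asmpt-QZQ} and Cauchy interlacing; you spell out exactly how that combination works. The second-bullet argument is correct: interlacing between the eigenvalues of $Z_n$ and those of the compression $QZ_nQ$ gives Kolmogorov distance $O(n^{-2})$ between the two relevant empirical measures, so the a.s.\ weak limit $\pi_{c^2}$ transfers from $QZ_nQ$ to $\{\lambda_2,\dots,\lambda_{n^2}\}$. The first-bullet argument via the trace decomposition $\trace[((c/b)nZ_n)^p]=\mu_n^p+O(n^{2-p})$ and the moment asymptotics of Proposition \ref{thm:asympt-Z} is also structurally correct.

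The one inaccuracy is your claim that the covariance of $\trace[((c/b)nZ_n)^p]$ is $O(n^{-2})$. The paper's remark immediately following Theorem \ref{thm:mnk} explicitly states the opposite: for the first item the covariances tend to zero \emph{but are not summable}, which is precisely why the theorem only asserts convergence in probability (not a.s.) for $\lambda_1$. If the variance were genuinely $O(n^{-2})$ one would get a.s.\ convergence via Borel--Cantelli for the first bullet as well, contradicting what the paper is careful to claim. Your final conclusion is unaffected because Chebyshev only needs variance $\to 0$, but the stated rate is wrong and is worth correcting: the $O(n^{-2})$ bound from Remark \ref{rem-as} applies to the moments underlying the second bullet (the $QZQ$ compression), not to the rescaled trace moments of $Z_n$ that control $\lambda_1$.
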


The proof of this result combines Theorems \ref{thm:asympt-Z}, \ref{thm:asmpt-QZQ} and Cauchy's interlacing theorem (\cite{bhatia}, Corollary III.1.5). 

\begin{remark}
The almost sure convergence argument described in Remark \ref{rem-as} does not extend to the first item of Theorem \ref{thm:mnk}, as
the covariances tend to zero but are not summable. 
\end{remark}

\section{Non-linear output dimension}\label{sec:non-linear}

In this section we generalize the ``linear'' model of \cite{collins-nechita-3} in a different direction than we did in Section \ref{sec:linear-gen}. We shall consider random quantum channels $\Phi: \M_n(\C) \to \M_n(\C)$ defined by the Stinespring representation \eqref{eq:Stinespring_nk}, where the dimension of the ancilla space $\C^k$ scales with $n$ in a non-linear fashion: 
\begin{equation}
	k \sim cn^d, \qquad \text{ when } n \to \iy.
\end{equation}
Here, $c>0$ and $d \geq 0$ are two real parameters of the model. As before, we are interested in the spectral properties of the random matrix $Z = [\Phi \otimes \ol \Phi] (E_n)$, where $E_n$ is the rank-one projection on the maximally entangled state in the input space of the product channel, $\C^n \otimes \C^n$. 

Before performing a detailed analysis of the spectrum of $Z$, let us make some observations on the role of the parameters $c$ and $d$, as well as on several particular cases already treated in the literature. Generally, in such models, the parameter $c$ will play the role of a \emph{scaling} parameter in the limiting spectral distribution. This phenomenon can be observed in \cite{nechita}, Theorem 5 or in \cite{collins-nechita-3}, Theorem 6.11. On the other hand, the exponent parameter $d$ will have a more \emph{qualitative} role to play, as it will decide the type of behavior of the spectrum of the random matrix $Z$. 

Several particular cases of this very general model of random quantum channels have already been studied in the literature. In \cite{collins-nechita-1}, we studied the case where the dimension $k$ of the ancillary system is fixed, which corresponds in our setting to the case $d=0$ (hence $k=c$). It has been shown (Section \ref{sec:k-fixed}, Theorem \ref{thm:k_fixed}) that in this situation, the random matrix $Z$ has one large eigenvalue ($\lambda_1 = k^{-1} + k^{-2} - k^{-3})$) and that the rest of the spectrum is ``flat'': $\lambda_2 = \cdots = \lambda_{k^2} = k^{-2} - k^{-3}$. In \cite{collins-nechita-3}, the case $d=1$ was investigated; this corresponds to a coupling with an ancilla space of dimension $k$ which scales as $k \sim cn$. The situation was rather different in this case: one large eigenvalue of size $(cn)^{-1}$ was observed, and the lower spectrum was not flat, having a free Poisson $\pi_{c^2}$ shape. As a final remark, note that the model under study here is different than the one in Section \ref{sec:linear-gen}, where inputs of arbitrary size were considered.

The main result of this section is the following theorem, which classifies the spectral behavior of the output random matrix $Z$ in terms of the parameter $d$. 

\begin{proposition}\label{thm:non-linear}
The asymptotic moments of the random output matrix $Z = [\Phi \otimes \ol \Phi] (E_n)$ are given by:
\begin{enumerate}
	\item If $d=0$ (see \cite{collins-nechita-1}):	
		\begin{equation}
			\E[\trace(Z^p)]  =  \left( \frac{1}{c} + \frac{1}{c^2} - \frac{1}{c^3} \right)^p + (c^2-1) \left(\frac{1}{c^2}-\frac{1}{c^3} \right)^p +o(1) \qquad \forall p \geq 2.
		\end{equation}
	\item If $d\in (0, 1)$:
		\begin{equation}
			\begin{split}
				&\E[\trace(Z^p)]  \sim  2c^{-2}n^{-2d} \sim 2k^{-2}  \qquad \text{if } p = 2; \\
				&\E[\trace(Z^p)]  \sim  c^{-p}n^{-dp} \sim k^{-p}  \qquad \forall p \geq 3. \\
			\end{split}
		\end{equation}	
	\item If $d=1$ (see \cite{collins-nechita-3}):
		\begin{equation}
			\begin{split}
				&\E[\trace(Z^p)]   \sim (1+2c^{-2})n^{-2} \qquad \text{if } p = 2; \\
				&\E[\trace(Z^p)]   \sim  c^{-p}n^{-dp} \sim k^{-p} \qquad \forall p \geq 3. \\
			\end{split}
		\end{equation}
	\item If $d \in (1,2)$:
		\begin{equation}
			\begin{split}
				&\E[\trace(Z^p)]   \sim  n^{-(2p-2)}  \qquad \text{if } p < \frac{2}{2-d}; \\
				&\E[\trace(Z^p)]   \sim  (1+c^{-p})n^{-dp}  \sim  (1+c^{-p})n^{-(2p-2)}  \qquad \text{if } p = \frac{2}{2-d}; \\
				&\E[\trace(Z^p)]   \sim  c^{-p}n^{-dp}   \sim k^{-p}  \qquad \text{if } p > \frac{2}{2-d}. \\
			\end{split}
		\end{equation}	
	\item If $d \geq 2$:
		\begin{equation}
			\begin{split}
				&\E[\trace(Z^p)]   \sim n^{-(2p-2)}  \qquad \forall p \geq 2.
			\end{split}
		\end{equation}		
\end{enumerate}
\end{proposition}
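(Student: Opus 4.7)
My approach is the moment method, specializing the graphical Weingarten expansion used in Proposition \ref{thm:asympt-Z} to the setting $m = n$ and $k \sim cn^d$. Starting from
\begin{equation*}
\E[\trace(Z^p)] = \sum_{\alpha,\beta \in \S_{2p}} k^{\#\alpha}\,n^{\#(\alpha\gamma^{-1}) + \#(\beta\delta) - p}\,\Wg(nk, \alpha\beta^{-1}),
\end{equation*}
with $\gamma, \delta$ as in \eqref{eq:def-gamma-delta}, and substituting $k \sim cn^d$ together with $\Wg(nk, \sigma) \sim (nk)^{-2p-|\sigma|}\Mob(\sigma)$ from Theorem \ref{thm:mob}, each summand contributes $c^{-|\alpha|-|\alpha\beta^{-1}|}\Mob(\alpha\beta^{-1})\, n^{p - g_d(\alpha,\beta)}$ to leading order, where
\begin{equation*}
g_d(\alpha,\beta) := d|\alpha| + |\alpha\gamma^{-1}| + |\beta\delta| + (1+d)|\alpha\beta^{-1}|.
\end{equation*}
Item 1 ($d=0$) is already contained in the exact spectral statement of Theorem \ref{thm:k_fixed}, so I focus on items 2--5; the core task is to minimize $g_d$ over $\S_{2p} \times \S_{2p}$ for $d > 0$.

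Two short triangle-inequality computations show that for $d > 0$ one has $g_d(\alpha,\beta) \geq h_d(\beta) := d|\beta| + |\beta\gamma^{-1}| + |\beta\delta|$ and $g_d(\alpha,\beta) \geq h_d(\alpha)$, both equalities realized at $\alpha = \beta$, so the two-variable optimization collapses to minimizing the one-variable functional $h_d$. Since $h_d(\beta) = h_0(\beta) + d|\beta|$ is affine in $d$, the minimum is a concave piecewise-linear function of $d$ determined by the Pareto frontier of the pairs $\{(|\beta|, h_0(\beta)) : \beta \in \S_{2p}\}$. Evaluating $h_d$ at the three canonical candidates, using $|\gamma| = 2p - 2$ and $|\delta| = |\delta\gamma^{-1}| = |\gamma\delta| = p$, gives
\begin{equation*}
h_d(\id) = 3p-2, \qquad h_d(\delta) = (d+1)p, \qquad h_d(\gamma) = (2p-2)d + p.
\end{equation*}
A pairwise comparison for $p \geq 2$ yields $h_d(\gamma) \geq h_d(\delta)$ with equality only at $p = 2$, so the envelope of the three candidate lines reduces to $\min(3p-2, (d+1)p)$, with crossover at $p = 2/(2-d)$ governed by the sign of $(d-2)p + 2$. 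Splitting on $d \in (0,1)$, $d = 1$, $d \in (1,2)$, $d \geq 2$ and the relevant $p$-regimes reproduces the scalings in items 2--5; collecting the Weingarten prefactors $c^{-|\alpha|}\Mob(\id) \in \{1, c^{-p}, c^{-(2p-2)}\}$ at the active minimizers gives the stated coefficients, with the doubling $2c^{-2}$ at $p = 2$ in item 2 (from $\delta$ and $\gamma$ tying) and the combination $1 + c^{-p}$ at the boundary $p = 2/(2-d)$ in item 4 (from $\id$ and $\delta$ tying).

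The main obstacle is the combinatorial step of verifying that no $\beta$ outside $\{\id, \delta, \gamma\}$ sits on the lower envelope. For $d = 1$ this is precisely the minimization performed in the proof of Theorem 6.8 of \cite{collins-nechita-3}. For $d \geq 2$, combining the triangle inequalities $|\beta| + |\beta\gamma^{-1}| \geq 2p-2$ and $|\beta| + |\beta\delta| \geq p$ yields $h_d(\beta) \geq (d-2)|\beta| + (3p-2) \geq 3p-2$ for $|\beta| \leq p$, with elementary splits handling $|\beta| > p$ and pinning $\id$ as unique minimizer. For $d \in (0, 1)$, the affine-in-$d$ decomposition $h_d = (1-d) h_0 + d h_1$, combined with the global bound $h_0 \geq p$ (from $|\beta\gamma^{-1}| + |\beta\delta| \geq |\delta\gamma^{-1}| = p$) and the $d = 1$ result $h_1 \geq 2p$ with unique minimizer $\delta$ (for $p \geq 3$), gives $h_d \geq (d+1)p$ with equality forcing $\beta = \delta$ (and $\beta \in \{\delta, \gamma\}$ at $p = 2$, matching the $2c^{-2}$ in item 2). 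The most delicate regime is $d \in (1, 2)$, where the naive interpolation $h_d = (2-d)h_1 + (d-1)h_2$ is not sharp in the interior; here a refined combinatorial argument is required, verifying that every $\beta$ on the geodesic from $\gamma$ to $\delta$ in the Cayley graph satisfies $|\beta| \geq p$ despite the naive triangle-inequality bound only giving $|\beta| \geq p - 1$. This refinement uses the structure of $\gamma^{-1}\delta$ as a product of $p$ disjoint transpositions and a cycle-parity analysis of its partial products, mirroring the case splits in the proof of Theorem 6.8 of \cite{collins-nechita-3}. Once the global minimum of $g_d$ is identified in each regime, summing the finitely many dominating terms with their $c$-prefactors produces the moment asymptotics claimed in items 2--5.
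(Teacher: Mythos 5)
Your framework is the same as the paper's: exact graphical Weingarten moment formula, asymptotic reduction to minimizing an exponent functional over $\S_{2p}\times\S_{2p}$, collapse to a one-variable functional via the triangle inequalities $|\alpha|+|\alpha\beta^{-1}|\geq|\beta|$ and $|\alpha\gamma^{-1}|+|\alpha\beta^{-1}|\geq|\beta\gamma^{-1}|$ (your $h_d$ equals the paper's $S_1 + p$), and then collection of the $c$-prefactors at the minimizers. Your evaluations $h_d(\id)=3p-2$, $h_d(\delta)=(d+1)p$, $h_d(\gamma)=(2p-2)d+p$ are correct, as is the crossover at $p=2/(2-d)$. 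Two parts of your argument are genuinely nice: the affine-in-$d$ decomposition $h_d=(1-d)h_0+d h_1$ for $d\in(0,1)$ cleanly reduces the interval to the two endpoints (the paper instead argues directly by writing $d|\beta|=d(|\beta|+|\beta\gamma^{-1}|)-d|\beta\gamma^{-1}|$ and invoking $|\beta\delta|\geq p$ for $\beta\leq\gamma$); and your $d\geq 2$ argument via $|\beta\gamma^{-1}|\geq 2p-2-|\beta|$, $|\beta\delta|\geq p-|\beta|$ works once the splits for large $|\beta|$ and the $d=2$ uniqueness (intersection of geodesics $[\id,\gamma]\cap[\id,\delta]=\{\id\}$) are filled in.

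The genuine gap is the regime $d\in(1,2)$, which you correctly flag as the delicate one but do not actually resolve. Your proposed key lemma --- that every $\beta$ on the geodesic $[\gamma,\delta]$ has $|\beta|\geq p$ --- cannot close the argument, because the minimization runs over all of $\S_{2p}$ and there are permutations outside $[\gamma,\delta]$ (starting with $\id$ itself, which is not on that geodesic since $|\gamma|+|\delta|=3p-2\neq p$) that must be controlled with a separate bound you do not supply. The paper's actual argument partitions $\S_{2p}$ into three cases according to whether $\beta\in[\id,\gamma]$, $\beta\in[\id,\gammat]\setminus[\id,\gamma]$, or $\beta\notin[\id,\gammat]$, where $\gammat$ is the full $2p$-cycle at distance one from $\gamma$; it then exploits the relation $S_1(\beta)=S_2(\beta)\pm 1$ and, crucially, the structural fact that $|\beta\delta|\geq p$ whenever $\beta\leq\gamma$ (because such $\beta$ respects the $T$/$B$ bipartition, forcing $\beta\delta$ to be fixed-point-free). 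That bipartite fixed-point-free observation, not the geodesic-length lemma you cite, is the ingredient that makes the case $\beta\in[\id,\gamma]$ give $h_d\geq 3p-2$ with unique minimizer $\id$. Until the $d\in(1,2)$ regime is argued with a complete case decomposition of this kind, items 4 of the proposition is not established.
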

\begin{proof}
The starting point of the proof is the following exact moment formula, obtained via the graphical calculus introduced in \cite{collins-nechita-1} 
and the Weingarten formula:
\begin{equation}
		\E[\trace(Z^p)] = \sum_{\alpha, \beta \in \S_{2p}}k^{\# \alpha}n^{\#(\alpha \gamma^{-1})+\# (\beta\delta) - p}\Wg(\alpha\beta^{-1}).
\end{equation}
After using the asymptotic expression for the ancillary dimension $k \sim cn^d$ and the Weingarten function 
\begin{equation}
\begin{split}
\Wg(\alpha\beta^{-1}) &\sim (nk)^{-2p - |\alpha\beta^{-1}|2p - |\alpha\beta^{-1}|}\Mob(\alpha\beta^{-1}) \\
&\sim c^{-2p - |\alpha\beta^{-1}|}n^{-(d+1)(2p + |\alpha\beta^{-1}|)}\Mob(\alpha\beta^{-1}),
\end{split}
\end{equation}
we obtain the following expression (which holds if the right-hand-side is non-zero):
\begin{equation}\label{eq:non-linear-asympt}
	\E[\trace(Z^p)] \sim \sum_{\alpha, \beta \in \S_{2p}} c^{-(|\alpha|+|\alpha\beta^{-1}|)}n^{-S(\alpha, \beta)}\Mob(\alpha\beta^{-1}),
\end{equation}
where $S(\alpha, \beta)$ is the exponent of $n$ in the preceding sum:
\begin{equation}
	S(\alpha, \beta) = d|\alpha|+|\alpha\gamma^{-1}| + |\beta\delta|+(d+1)|\alpha\beta^{-1}| - p.
\end{equation}
In this proof, since the permutation $\delta$ is a product of disjoint transpositions, thus an involution, we shall use the fact that $|\alpha\delta|=|\alpha\delta^{-1}|=|\alpha^{-1}\delta|$. In order to find the dominating terms in the sum \eqref{eq:non-linear-asympt}, one has to minimize the quantity $S(\alpha, \beta)$ over $\S_{2p}^2$. The solution to this problem is obtained in three steps. First, using the following triangular inequalities:
\begin{equation}\label{eq:alpha-beta}
  \begin{split}
	  |\alpha|+|\alpha\beta^{-1}| &\geq |\beta|,\\
		|\alpha\gamma^{-1}|+|\alpha\beta^{-1}| &\geq |\beta \gamma^{-1}|,
	\end{split}
\end{equation}
we obtain that
\begin{equation}\label{eq:S1}
	S(\alpha, \beta) \geq d|\beta|+|\beta\gamma^{-1}| + |\beta\delta| - p =: S_1(\beta),
\end{equation}
where the inequality can be saturated if , e.g. $\alpha = \beta$. The minimization problem for the $S_1$ function, although much simpler that the two-variable problem for $S$, can be further simplified by replacing the two-cycle permutation $\gamma$ with the full-cycle:
\begin{equation}\label{eq:S2}
	\text{minimize } \qquad S_2(\beta) = d|\beta|+|\beta\gammat^{-1}| + |\beta\delta| - p,
\end{equation}
where $\gammat = (p^T \cdots 2^T \; 1^T \; 1^B \; 2^B \cdots p^B)$. Since the permutation $\delta$ is an element of the geodesic $\id \to \gammat$, the solution for the $S_2$ minimization problem is easy to find using one or more of the following inequalities:
\begin{equation}
  \begin{split}
  	|\beta| &\geq 0;\\
	  |\beta|+|\beta \gammat^{-1}| &\geq |\gammat| = 2p-1;\\
		|\beta \delta| + |\beta \gammat^{-1}| &\geq |\delta \gammat^{-1}| = p-1.
	\end{split}
\end{equation}
The solution for the $S_2$ problem, in terms of the value of the parameter $d$, is summarized in Table \ref{tab:S2}.
\begin{table}[ht]
	\centering
		\begin{tabular}{|c|c|c|}
			\hline
			$d$ & Minorant for $S_2(\beta)$ & Equality cases\\
			\hline\hline
			$0$ & $-1$ & $\{\delta \to \gammat\}$ \\ \hline
			$(0,2)$ & $dp-1$ & $\delta$ \\ \hline
			$2$ & $2p-1$ & $\{\id \to \delta\}$ \\ \hline
			$(2, \iy)$ & $2p-1$ & $\id$ \\ \hline
		\end{tabular}
	\caption{Solution to the $S_2$ minimization problem of equation \eqref{eq:S2}.}
	\label{tab:S2}
\end{table}

Next, we move towards finding the minimum of $S_1(\beta)$, defined in equation \eqref{eq:S1}. The permutations $\gamma$ and $\gammat$ are at distance one:
\begin{equation}
	\gamma = \gammat \cdot (1^B\; p^T),
\end{equation}
hence the same holds for $\beta\gamma^{-1}$ and $\beta\gammat^{-1}$. We have thus $S_1(\beta) = S_2(\beta) \pm 1$ and, even more precisely, 
\begin{equation}
  S_1(\beta) = 
		\begin{cases}
		S_2(\beta) - 1 \quad &\text{if } 1^B \text{ and } p^T \text{ are in the same block of } \beta\gammat^{-1}, \\
		S_2(\beta) + 1 \quad &\text{otherwise.}
		\end{cases}	
\end{equation}

In the same manner that is was argued in \cite{collins-nechita-3}, $p^T$ and $1^B$ belong to the same block of $\beta\gammat^{-1}$ if and only if $\beta \leq \gamma$ (the permutations being compared with the partial order relation on the corresponding non-crossing partitions). Analyzing the different equality cases in Table \ref{tab:S2} and using the fact that the unique element of the geodesic set $\{\id \to \delta\}$ which is smaller than $\gamma$ is $\beta=\id$, we conclude that, for $d \geq 2$, $S_1(\beta) \geq 2p-2$, with equality iff. $\beta = \id$. The case $d \in [0,2)$ is more intricate, since one cannot have $S_1(\beta) = S_2(\beta) - 1$ and saturate at the same time the lower bound for $S_2$. 

For $d=0$, it was shown in \cite{collins-nechita-1} that $S_1(\beta) \geq 0$, with equality iff $\beta \in \{\delta, \gamma\}$. Choose $d \in (0, 1]$ and consider $\beta \in \S_{2p}$ such that $S_1(\beta) = S_2(\beta) - 1$. Then, since $\beta \leq \gamma$, one has $|\beta\delta|\geq p$ and thus
\begin{equation}
	\begin{split}
		S_1(\beta) &= S_2(\beta) - 1 = d|\beta|+|\beta\gamma^{-1}| + |\beta\delta| - p - 1\\
		&= d(|\beta|+|\beta\gamma^{-1}|) + (1-d)|\beta\gamma^{-1}| + |\beta\delta| - p - 1\\
		&\geq d(2p-1) + (1-d) + p - p - 1 = d(2p - 2).
	\end{split}
\end{equation}
We conclude that, for $d \in (0, 1]$, $p \geq 3$ and $\beta$ such that $S_1(\beta) = S_2(\beta) - 1$, $S_1(\beta) > S_1(\delta) = dp$. Thus, for $d \in (0, 1]$ and $p \geq 3$, the minimum $dp$ is attained only at the point $\beta = \delta$. For $p=2$, an exhaustive search in $S_4$ reveals that, for $d \in (0,1)$, $S_1(\beta) \geq dp$, with equality iff $\beta = \delta, \gamma$, and, for $d =1$, $S_1(\beta) \geq p$, with equality iff $\beta = \id, \delta, \gamma$.

In the case $d \in (1,2)$, the situation is different. We shall consider three cases:
\begin{enumerate}
	\item $\beta \notin \{\id \to \gammat\}$. Since $\beta$ is not a geodesic permutation, we have that $|\beta|+|\beta\gammat^{-1}| \geq 2p+1$, $|\beta|+|\beta\delta| \geq p+2$, $|\beta| \geq 1$ and $|\beta\delta| \geq 1$. It follows that 
		\begin{equation}
			\begin{split}
				S_1(\beta) &\geq S_2(\beta)-1 \geq |\beta|+|\beta\gammat^{-1}| + (d-1)(|\beta| + |\beta\delta|) + (2-d)|\beta\delta| - p - 1 \\
				&\geq 2p+1 + (d-1)(p+2) + (2-d) - p - 1 = dp + d > dp = S_1(\delta).
			\end{split}
		\end{equation}
	\item $\beta \in \{\id \to \gamma\}$. In this situation, $S_1(\beta) = S_2(\beta)-1$ and thus (we use the fact that $|\beta \delta|\geq p$ in this case)
		\begin{equation}
			\begin{split}
				S_1(\beta) &= S_2(\beta)-1 \geq |\beta|+|\beta\gammat^{-1}| + (d-1)|\beta| + |\beta\delta| - p - 1 \\
				&\geq 2p-1 + 0 + p - p - 1 \geq 2p -2 = S_1(\id),
			\end{split}
		\end{equation}	
		with equality iff. $\beta=\id$.
	\item $\beta \in \{\id \to \gammat\}, \; \beta \notin \{\id \to \gamma\}$. In this situation, $S_1(\beta) = S_2(\beta)+1$ and thus
		\begin{equation}
			\begin{split}
				S_1(\beta) &= S_2(\beta)+1 \geq |\beta|+|\beta\gammat^{-1}| + (d-1)(|\beta| + |\beta\delta|) + (2-d)|\beta\delta| - p + 1 \\
				&\geq 2p-1 + (d-1)p + (d-2)0 - p + 1 \geq dp = S_1(\delta),
			\end{split}
		\end{equation}	
		with equality iff. $\beta=\delta$.
\end{enumerate}
Analyzing the three cases, we conclude that if $p \leq 2/(2-d)$, we have $S_1(\beta) \geq 2p-2$, otherwise $S_1(\beta) \geq dp$, the unique minimizers being respectively $\beta=\id$, $\beta=\delta$ and $\beta \in \{\id, \delta\}$ for the interface case $p = 2/(2-d)$. The answer to the $S_1$ minimization problem is summarized in Table \ref{tab:S1}.
\begin{table}[ht]
	\centering
		\begin{tabular}{|c|c|c|}
			\hline
			$d$ & Minorant for $S_1(\beta)$ & Equality cases\\
			\hline\hline
			$0$ & $0$ & $\{\delta \to \gamma\}$ \\ \hline
			$(0,1) \quad (p = 2)$ & $2d$ & $\delta, \gamma$ \\ \hline
			$1 \quad (p = 2)$ & $2$ & $\id, \delta, \gamma$ \\ \hline
			$(0,1] \quad (p\geq 3)$ & $dp$ & $\delta$ \\ \hline
			$(1,2) \quad (p < \frac{2}{2-d})$ & $2p-2$ & $\id$ \\ \hline
			$(1,2) \quad (p = \frac{2}{2-d})$ & $2p-2 = dp$ & $\id, \delta$ \\ \hline				
			$(1,2) \quad (p > \frac{2}{2-d})$ & $dp$ & $\delta$ \\ \hline
			$[2, \iy)$ & $2p-2$ & $\id$ \\ \hline
		\end{tabular}
	\caption{Solution to the $S_1$ minimization problem of equation \eqref{eq:S1}.}
	\label{tab:S1}
\end{table}

In order to solve the initial problem for the two-variable function $S(\alpha, \beta)$, one needs to notice that, for a fixed value of $\beta$, the only possibility to saturate the inequalities \eqref{eq:alpha-beta} at the same time is $\alpha = \beta$. Since, for $d >0$, both these inequalities are used to go from $S$ to $S_1$, we conclude that for strictly positive $d$, the solution for the $S$ minimization problem can be found in Table \ref{tab:S1}, with $\alpha = \beta$. The solution for $d=0$ has been entirely described in \cite{collins-nechita-1}. One concludes by replacing the values for the minimizing permutations into equation \eqref{eq:non-linear-asympt}.
\end{proof}

In the cases $d=0$ and $d\geq 2$, the behavior of the eigenvalues can be easily deduced from the moment information, since one can identify in the formulas the moments of some probability distribution. In the case $d=0$, as it was argued in \cite{collins-nechita-1}, the larges eigenvalue of $Z$ converges to $\frac{1}{c} + \frac{1}{c^2} - \frac{1}{c^3}$ and the $k^2-1$ others converge to $\frac{1}{c^2} - \frac{1}{c^3}$. Such a behavior is typical for the model we study, with a spectrum containing one large eigenvalue and $k^2-1$ (or $n^2-1$) identical smaller eigenvalues. The case $d \geq 2$ is somehow atypical: all the $n^2$ eigenvalues behave like $n^{-2}$. This is due to the fact that the ``large'' eigenvalue, which usually behaves as $k^{-1}$ has no contribution asymptotically. We summarize these facts in the following proposition.

\begin{proposition}
In the regime $d=0$ (which corresponds to considering a fixed ancilla dimension $k=c$), the eigenvalues of the random matrix $Z$ are such that, almost surely, in the limit $n \to \iy$,
\begin{equation}
	\lambda_1  \to \frac{1}{c} + \frac{1}{c^2} - \frac{1}{c^3} \qquad \text{and} \qquad \lambda_2, \ldots, \lambda_{k^2} \to \frac{1}{c^2} - \frac{1}{c^3}.
\end{equation}
In the regime $d\geq 2$, the (rescaled) empirical spectral distribution of $Z$ converges to the Dirac mass at 1, $\delta_1$:
\begin{equation}
	\lim_{n\to \iy} \frac{1}{n^2} \sum_{i=1}^{n^2} \delta_{n^2 \lambda_i} = \delta_1.
\end{equation}

\end{proposition}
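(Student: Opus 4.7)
The plan is to treat the two regimes separately, using the asymptotic moment formulae of Proposition \ref{thm:non-linear} as the main input and combining them with an extra ingredient: a rank bound for $d=0$, the method of moments for $d\geq 2$.

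For the case $d=0$, the ancilla dimension $k=c$ is a fixed integer, and the first ingredient is the rank bound $\rk(Z) \leq k^{2} = c^{2}$: since $\Phi$ has Kraus rank at most $k$, the product $\Phi \otimes \ol\Phi$ has Kraus rank at most $k^{2}$, so evaluated on the rank-one input $E_{n}$ it produces a matrix of rank at most $k^{2}$. Thus $Z$ has at most $c^{2}$ nonzero eigenvalues $\lambda_{1} \geq \cdots \geq \lambda_{c^{2}} \geq 0$. Setting $a = 1/c + 1/c^{2} - 1/c^{3}$ and $b = 1/c^{2} - 1/c^{3}$, item 1 of Proposition \ref{thm:non-linear} tells us that $\E\trace(Z^{p}) \to a^{p} + (c^{2}-1)b^{p}$ for every $p \geq 1$, and these are exactly the moments of the discrete measure $\delta_{a} + (c^{2}-1)\delta_{b}$ (supported on $c^{2}$ atoms counted with multiplicity). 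Since a discrete measure with at most $c^{2}$ atoms is uniquely determined by its first $2c^{2}-1$ moments, the empirical spectrum $\sum_{i=1}^{c^{2}}\delta_{\lambda_{i}}$ converges in expectation to $\delta_{a} + (c^{2}-1)\delta_{b}$. By eigenvalue ordering and the strict inequality $a > b$, this forces $\lambda_{1} \to a$ and $\lambda_{2}, \ldots, \lambda_{c^{2}} \to b$.

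For the case $d \geq 2$, one rescales by considering $M_{n} := n^{2}Z$. By item 5 of Proposition \ref{thm:non-linear},
\[
\frac{1}{n^{2}}\,\E\trace(M_{n}^{p}) \;=\; n^{2p-2}\,\E\trace(Z^{p}) \;\longrightarrow\; 1 \qquad (p \geq 2),
\]
while for $p=1$ one has $\frac{1}{n^{2}}\trace(M_{n}) = \trace(Z) = 1$ deterministically. Hence every moment of the random probability measure $\mu_{n} = n^{-2}\sum_{i=1}^{n^{2}}\delta_{n^{2}\lambda_{i}}$ tends to $1$, i.e.\ to the $p$-th moment of $\delta_{1}$. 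Since $\delta_{1}$ is compactly supported (hence determined by its moments), the method of moments yields weak convergence $\mu_{n} \to \delta_{1}$ in expectation.

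The upgrade to almost sure convergence in both regimes proceeds via the Borel-Cantelli strategy sketched in Remark \ref{rem-as}: one controls the variance of each $\trace(Z^{p})$ (resp.\ $\trace((n^{2}Z)^{p})$) by $O(n^{-2})$ through a second graphical expansion run in parallel with that of Proposition \ref{thm:non-linear}. This variance estimate is the only genuinely technical point, but it is a direct adaptation of the computation in the appendix of \cite{collins-nechita-3}; everything else follows immediately from the moment asymptotics already established.
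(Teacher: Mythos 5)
Your argument is correct and follows essentially the same route as the paper: the paper simply observes that the limiting spectral behaviour ``can be easily deduced from the moment information'' of Proposition~\ref{thm:non-linear} (citing~\cite{collins-nechita-1} for $d=0$), and your proposal supplies the standard completion of that remark --- the rank bound $\rk Z\leq k^2$ together with the finitely-many-atoms moment problem for $d=0$, and the method of moments against $\delta_1$ for $d\geq 2$, with the almost-sure upgrade via the Borel--Cantelli variance argument of Remark~\ref{rem-as}.
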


As in Section \ref{sec:linear-gen}, in order to understand fully the eigenvalues of the random matrix $Z$, we need to understand the lower part of the spectrum, in the remaining cases $d \in (0,1)$ and $d \in (1,2)$ (the case $d=1$ being treated in \cite{collins-nechita-3}). This is done by ``pinching'' the matrix $Z$ by the projector $Q = \I - E_n$, orthogonal to the maximally entangled state $E_n$.

\begin{proposition}\label{prop:asymptotics-QZQ-01}
The matrix $k^2QZQ$ converges, in moments, to the Dirac mass at 1, $\delta_1$.
\end{proposition}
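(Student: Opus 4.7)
The plan is to use the method of moments, closely following the strategy from the proof of Theorem~6.9 in \cite{collins-nechita-3} (which handles the case $d=1$). Since $Z$ has rank at most $k^2$ and its unique large eigenvector is essentially the maximally entangled vector $\Omega$, the pinched matrix $QZQ$ has rank at most $k^2-1$ and its nonzero eigenvalues are expected to concentrate around $k^{-2}$. Convergence of $k^2QZQ$ in moments to $\delta_1$ (with normalization by the effective rank $k^2-1$) is thus equivalent to showing
\begin{equation*}
\E\!\left[\trace\!\left((k^2QZQ)^p\right)\right]\sim k^2,\quad\text{i.e.,}\quad \E[\trace((QZQ)^p)]\sim k^{2-2p},\quad\forall p\geq 1.
\end{equation*}

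First I would derive an exact moment formula using the graphical calculus of Subsection~\ref{sec:planar} and Weingarten's formula (Theorem~\ref{thm:Wg_diag}). Writing $Q=\I-E_n$ and expanding produces a sum over subsets $T\subseteq\{1,\dots,p\}$ recording the positions at which $E_n$ is used in place of $\I$, weighted by $(-1)^{|T|}$. For each such $T$ the unitary integration yields the usual double sum over $(\alpha,\beta)\in\S_{2p}^2$, in which the Bell-state contractions at positions in $T$ modify the involution $\delta$ into a new involution $\delta_T$:
\begin{equation*}
\E[\trace((QZQ)^p)]=\sum_{T\subseteq\{1,\dots,p\}}(-1)^{|T|}\sum_{\alpha,\beta\in\S_{2p}}k^{\#\alpha}\,n^{\#(\alpha\gamma^{-1})+\#(\beta\delta_T)-p-|T|}\,\Wg(\alpha\beta^{-1}).
\end{equation*}
Inserting the asymptotics $k\sim cn^d$ and the expansion of $\Wg$ from Theorem~\ref{thm:mob} then rewrites the whole expression as an asymptotic sum governed by a modified exponent $\hat S(\alpha,\beta;T)$ derived from the function $S(\alpha,\beta)$ of the proof of Proposition~\ref{thm:non-linear}.

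The crux is to show that the alternating signs $(-1)^{|T|}$ exactly cancel the contributions that generated the leading order of $\E[\trace(Z^p)]$ in Proposition~\ref{thm:non-linear}, namely those coming from $\beta=\delta$ for $d\in(0,1)$ and for $d\in(1,2)$ with $p>2/(2-d)$, and from $\beta=\id$ when $d\in(1,2)$ with $p<2/(2-d)$. After this cancellation, the next admissible minimizers of $\hat S$ contribute precisely $k^{2-2p}$ with overall constant $1$, reproducing the moments of $\delta_1$.

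The main obstacle is the combinatorial bookkeeping of these cancellations across the $2^p$ sub-sums, compounded in the regime $d\in(1,2)$ by the change of dominant minimizer around $p=2/(2-d)$; the cleanest route is to adapt verbatim the argument of Theorem~6.9 in \cite{collins-nechita-3}, since the minorations on $S(\alpha,\beta)$ already established in the proof of Proposition~\ref{thm:non-linear} transfer to the signed sum with only minor modifications. An alternative, slightly more combinatorial approach is to use the identity $\trace((QZQ)^p)=\sum_{S}(-1)^{|S|}\prod_j\bra{\Omega}Z^{a_j}\ket{\Omega}$ (where the $a_j$ are the gap lengths between the Bell insertions) together with the graphical calculation of the overlaps $\bra{\Omega}Z^q\ket{\Omega}$, in which the leading order $k^{-p}$ cancels by the binomial identity $\sum_{s=0}^{p}(-1)^s\binom{p}{s}=0$; the residual terms are then of the required order $k^{2-2p}$. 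The almost-sure strengthening follows by the Borel--Cantelli argument sketched in Remark~\ref{rem-as}.
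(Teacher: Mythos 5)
Your overall plan (method of moments, expand $Q=\I-E$, signed sum over insertion patterns, graphical Weingarten, adapt the minimization arguments of \cite{collins-nechita-3}) is the same strategy the paper uses, but there is a concrete error in your setup and the decisive mechanism is left unspecified.

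The error is in where the Bell-state insertions land in the moment formula. You write that the insertions ``modify the involution $\delta$ into a new involution $\delta_T$'' and you retain $\#(\alpha\gamma^{-1})$, so your asymptotic sum has exponent $\#(\alpha\gamma^{-1})+\#(\beta\delta_T)$. This is backwards. The projector $Q=\I-E_n$ is inserted \emph{in the output space} $\C^n\otimes\C^n$, between successive copies of $Z$, so it is the trace-cycle permutation $\gamma$ that is disturbed, not the input Bell permutation $\delta$. In the paper the exact moment formula becomes $\sum_{\alpha,\beta}k^{\#\alpha}n^{\#(\alpha\fhat^{-1})+\#(\beta\delta)-p}\Wg(\alpha\beta^{-1})$, where $\fhat$ interpolates between $\gamma$ (for $f\equiv\I$) and $\delta$ (for $f\equiv E$). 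Moreover $\fhat$ is not an involution for general $f$. If you carry the minimization through with your $\delta_T$ in place of $\fhat$, the exponent function and its minimizers come out wrong, so the proof would not close.

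The second issue is that you describe the crux (``the alternating signs exactly cancel the leading contributions'') but do not identify the mechanism that makes this work. The paper does not merely match asymptotics with signs; it shows that for every permutation $\alpha$ in the set $\V=\{\sigma\in\S_{2p}\,:\,\sigma\delta \text{ has a fixed point}\}$, the inner signed sum over the $2^p$ choice functions $f$ vanishes \emph{identically}, not just at leading order. This exact vanishing is what removes the dangerous terms (those near $\beta=\delta$ or $\beta=\id$ in the unprojected computation); only then does one gain the inequality $|\alpha\delta|\geq p$ for the survivors, which is what forces the unique minimizer $\alpha=\beta=\gamma$, $f\equiv\I$, and the limit value $1$. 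Your sketch with the binomial identity $\sum_{s}(-1)^s\binom{p}{s}=0$ captures the right spirit but is not by itself sufficient, because the different $(\alpha,\beta,f)$-terms of the same insertion size do not carry identical coefficients; the paper's term-by-term cancellation over $f$ for fixed $\alpha\in\V$ is the correct formalization.

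In short: fix the moment formula so that the insertions modify $\gamma$ (producing $\fhat$) rather than $\delta$, and replace the heuristic sign-cancellation with the exact vanishing of the $f$-sum over $\alpha\in\V$. With those two corrections your outline does reproduce the paper's proof.
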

\begin{proof}
We follow the idea of the proof of Theorem 6.10 in \cite{collins-nechita-3}, compute the moments of the rank $k^2$ matrix $k^2QZQ$, and show that they converge to the corresponding moments of the limit law:
\begin{equation}
	\lim_{n \to \iy} \frac{1}{k^2}\E[\trace(k^2QZQ)^p] = 1.
\end{equation}
After replacing $Q=\I-E$ and developing the product, we get
\begin{equation}\label{eq:QZQ-exact}
	\begin{split}
		\frac{1}{k^2}\E[\trace(k^2QZQ)^p] &= k^{2p-2}\E[\trace (\I-E)Z(\I-E)Z\cdots(\I-E)Z] \\
&= c^{2p}n^{2p-2} \sum_{f \in \F_p} (-1)^{|f^{-1}(E)|}n^{-|f^{-1}(E)|} \E[\trace f(1) Z f(2)Z \cdots f(p) Z],
	\end{split}
\end{equation}
where $\F$ is a set of the $2^p$ choice functions $f:\{1, 2, \ldots, p\} \to \{\I, E\}$. The factor $n^{-|f^{-1}(E)|}$ is due to the normalization of the Bell states $E = E_n$. The moment $\E[\trace f(1) Z f(2)Z \cdots f(p) Z]$ is computed via the graphical Weingarten calculus:
\begin{equation}
	\E[\trace f(1) Z f(2)Z \cdots f(p) Z] = \sum_{\alpha, \beta \in \S_{2p}}k^{\# \alpha}n^{\#(\alpha \fhat^{-1}) +\# (\beta\delta) - p}\Wg(\alpha\beta^{-1}),
\end{equation}
where $\fhat \in \S_{2p}$ is the permutation associated to the choice function $f \in \F_p$ describing the way $f$ connects the different instances of the channel (the arithmetic operations of indices $i$ should be understood modulo $p$):
\begin{align*}
i^T &\stackrel{\fhat}{\mapsto}
\begin{cases}
(i-1)^T \quad &\text{ if } f(i)=\I\\
i^B \quad &\text{ if } f(i)=E,
\end{cases}
\\
i^B &\stackrel{\fhat}{\mapsto}
\begin{cases}
(i+1)^B \quad &\text{ if } f(i+1)=\I\\
i^T \quad &\text{ if } f(i+1)=E.
\end{cases}
\end{align*}
 
Exactly as in the proof of Theorem 6.10 from \cite{collins-nechita-1}, one has to understand the possible cancellations of high powers in $n$. In order to do this, we rewrite the non-asymptotic equation \eqref{eq:QZQ-exact} as
\begin{equation}
	\frac{1}{k^2}\E[\trace(k^2QZQ)^p] = \sum_{\alpha, \beta \in \S_{2p}}k^{4p-|\alpha|-2} n^{3p-|\beta \delta|} \Wg(\alpha \beta^{-1}) \sum_{f \in \F_p} (-1)^{|f^{-1}(E)|} n^{-(|f^{-1}(E)| + |\alpha \fhat^{-1}|)}.
\end{equation}
As in \cite{collins-nechita-1}, we can show that for all permutations $\alpha \in \V$, the sum over all choices $f \in \F_p$ is exactly zero, where
\begin{equation}
	\begin{split}
		\V &= \{\sigma \in \S_{2p} \; | \; \exists i \in \{1, \ldots, p\} \text{ s.t. } \sigma(i^T) = i^B \text{ or } \sigma(i^B) = i^T \} \\
		&= \{\sigma \in \S_{2p} \; | \; \sigma \delta \text{ has at least one fixed point}\}.
	\end{split}
\end{equation}
Hence, 
\begin{equation}\label{eq:QZQ-asympt}
	\frac{1}{k^2}\E[\trace(k^2QZQ)^p] \sim \!\!\!\!\!\!\!\!\sum_{f \in \F_p, \; \alpha, \beta \in \S_{2p}, \; \alpha \notin \V}\!\!\!\!\!\!\!\! (-1)^{|f^{-1}(E)|} c^{2p-2-(|\alpha| + |\alpha \beta^{-1}|)} n^{-S(\alpha, \beta, f)} \Mob(\alpha \beta^{-1}),
\end{equation}
where
\begin{equation}
	S(\alpha, \beta, f) = |\beta \delta| + d|\alpha| + (d+1)|\alpha \beta^{-1}|+|f^{-1}(E)| + |\alpha \fhat^{-1}| - p(2d+1)+2d .
\end{equation}
Since $\alpha \notin \V$, $\alpha \delta$ has no fixed point, and hence $|\alpha \delta| \geq p$. Using the facts that $|\alpha \beta^{-1}| + |\beta \delta| \geq |\alpha \delta|$, $|\alpha \beta^{-1}| \geq 0$ and $|\alpha|+|\alpha \fhat^{-1}| \geq |\fhat|$, we obtain that
\[ S(\alpha, \beta, f) \geq |f^{-1}(E)| + d|\fhat| + (1-d)|\alpha \fhat^{-1}| - 2dp + 2d,\]
with equality if and only if $\alpha = \beta \in \{\id \to \fhat\}$. 

The number of cycles of $\fhat$ is easily shown to be:
\begin{equation}
	\# \fhat = 
\begin{cases}
2 \quad &\text{ if } f \equiv \I,\\
|f^{-1}(E)| \quad &\text{ otherwise}.
\end{cases}
\end{equation}
hence, $|f^{-1}(E)| + d|\fhat| \geq 2dp - 2d$, with equality iff $f \equiv \I$. We can conclude that $S(\alpha, \beta, f) \geq 0$, with equality if and only if $\alpha = \beta = \hat{\I} = \gamma$. Replacing these values in equation \eqref{eq:QZQ-asympt}, we obtain the announced result.
\end{proof}

Using Cauchy's interlacing theorem (\cite{bhatia}, Corollary \text{III.1.5}) for the eigenvalues of $QZQ$ and those of $Z$, we obtain the complete description of the spectrum of the random output matrix.

\begin{theorem} \label{thm:asymptotics-Z-01}
	In the regime $k \sim cn^d$, with $d \in (0, 1)$, the eigenvalues $\lambda_1\geq \cdots \geq\lambda_{n^2}$ of $Z = [\Phi \otimes \ol \Phi] (E_n)$ satisfy:
	\begin{itemize}
		\item In probability, $k\lambda_1\to 1$.
		\item Almost surely, $\frac{1}{k^2-1}\sum_{i=2}^{k^2}\delta_{k^2\lambda_i}$ converges in distribution to the Dirac mass at 1, $\delta_1$.
		\item The remaining $n^2-k^2$ eigenvalues are null: $\lambda_{k^2+1} = \cdots = \lambda_{n^2} = 0$.
	\end{itemize}
\end{theorem}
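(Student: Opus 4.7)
The plan is to combine three ingredients: an explicit rank bound for $Z$, a two-sided estimate for $\lambda_1$ obtained from the moment asymptotics of Proposition~\ref{thm:non-linear} and from Winter's deterministic inequality, and Cauchy's interlacing theorem applied together with Proposition~\ref{prop:asymptotics-QZQ-01}.

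First, I would establish the rank bound $\rk(Z)\leq k^{2}$, which immediately yields the third bullet. Writing $\Phi(X)=\sum_{\alpha=1}^{k}A_{\alpha}XA_{\alpha}^{*}$ in Kraus form and computing
\[
(A_{\alpha}\otimes\overline{A_{\beta}})\ket{\Omega_{n}} \;=\; \frac{1}{\sqrt{n}}\,\mathrm{vec}(A_{\alpha}A_{\beta}^{*}),
\]
where $\ket{\Omega_{n}}$ is the normalized maximally entangled vector so that $E_{n}=\ket{\Omega_{n}}\bra{\Omega_{n}}$, one obtains $Z=\frac{1}{n}\sum_{\alpha,\beta=1}^{k}\mathrm{vec}(A_{\alpha}A_{\beta}^{*})\,\mathrm{vec}(A_{\alpha}A_{\beta}^{*})^{*}$, a sum of $k^{2}$ rank-one positive operators. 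Hence $\lambda_{k^{2}+1}=\cdots=\lambda_{n^{2}}=0$.

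Next I would treat the largest eigenvalue via the sandwich $1/k\leq \lambda_{1}\leq(1+o(1))/k$. Winter's deterministic lower bound $\lambda_{1}\geq\bra{\Omega_{n}}Z\ket{\Omega_{n}}\geq n^{-2}\sum_{\alpha}[\trace(A_{\alpha}A_{\alpha}^{*})]^{2}\geq 1/k$ follows from dropping the off-diagonal nonnegative terms, Cauchy--Schwarz, and the isometry identity $\sum_{\alpha}A_{\alpha}^{*}A_{\alpha}=I_{n}$. For the upper bound, $\lambda_{1}^{p}\leq\trace(Z^{p})$ together with case (2) of Proposition~\ref{thm:non-linear} ($\E\trace(Z^{p})\sim k^{-p}$ for any fixed $p\geq 3$) and Markov's inequality gives $\P[k\lambda_{1}>1+\e]\leq(1+o(1))(1+\e)^{-p}$; sending $n\to\iy$ and then $p\to\iy$ yields $k\lambda_{1}\to 1$ in probability.

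For the bulk statement, I would apply Cauchy's interlacing theorem to the compression of $Z$ by the rank-$(n^{2}-1)$ projector $Q=\I-E_{n}$. Writing $\mu_{1}\geq\cdots\geq\mu_{n^{2}-1}$ for the eigenvalues of $QZQ$ on the range of $Q$, interlacing gives $\mu_{i-1}\geq\lambda_{i}\geq\mu_{i}$ for $2\leq i\leq n^{2}-1$. Since $QZQ$ also has rank at most $k^{2}$, Proposition~\ref{prop:asymptotics-QZQ-01} (upgraded to an a.s.\ statement via the Borel--Cantelli strategy of Remark~\ref{rem-as}) asserts that the empirical distribution of $\{k^{2}\mu_{i}\}$ converges weakly to $\delta_{1}$; removing one point from a sample of size $k^{2}$ does not alter the weak limit, so both $\frac{1}{k^{2}-1}\sum_{i=1}^{k^{2}-1}\delta_{k^{2}\mu_{i}}$ and $\frac{1}{k^{2}-1}\sum_{i=2}^{k^{2}}\delta_{k^{2}\mu_{i}}$ still converge to $\delta_{1}$.

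The main obstacle is the final squeezing step, since weak convergence is not automatically preserved under pointwise sandwiches. The cleanest justification is to observe that Cauchy interlacing forces the Kolmogorov distance between the empirical CDFs of $\{k^{2}\lambda_{i}\}_{i=2}^{k^{2}}$ and $\{k^{2}\mu_{i}\}_{i=1}^{k^{2}-1}$ to be at most $1/(k^{2}-1)$, which vanishes; combined with the weak convergence of the $\mu$-CDFs to the step function of $\delta_{1}$ (a continuity point of the limit everywhere except at $1$, which is irrelevant for distributional convergence to a Dirac mass), this delivers the second bullet. The asymmetric modes of convergence (probability for $\lambda_{1}$, almost surely for the bulk) mirror those in Theorem~\ref{thm:mnk} and are consistent with Remark~\ref{rem-as}.
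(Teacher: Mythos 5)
Your proposal is correct and follows essentially the same route as the paper, which proves the theorem by combining Proposition~\ref{prop:asymptotics-QZQ-01} with Cauchy's interlacing theorem applied to $QZQ$ and $Z$; the supporting facts you supply (the Kraus-form rank bound $\rk(Z)\leq k^2$, the sandwich $1/k\leq\lambda_1\leq(1+o(1))/k$ from Winter's deterministic inequality and the $p\geq 3$ moment asymptotics, and the $1/(k^2-1)$ Kolmogorov-distance argument for the interlaced empirical measures) are exactly the pieces the paper leaves implicit or delegates to the earlier references. The asymmetry in convergence modes (in probability for $\lambda_1$, almost surely for the bulk) is also handled consistently with Remark~\ref{rem-as}.
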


We finally treat the regime $d \in (1,2)$, stating the results and underlining the (small) differences between the proofs in this case and the proofs of Proposition \ref{prop:asymptotics-QZQ-01} and, respectively, of Theorem \ref{thm:asymptotics-Z-01}.

\begin{proposition}\label{prop:asymptotics-QZQ-12}
In the regime $d \in (1,2)$, the matrix $n^2QZQ$ converges, in moments, to the Dirac mass at 1, $\delta_1$.
\end{proposition}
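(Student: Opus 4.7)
The plan is to closely follow the template of the proof of Proposition~\ref{prop:asymptotics-QZQ-01}, replacing the rescaling factor $k^2$ by $n^2$. Starting from the identity $\frac{1}{n^2}\E[\trace((n^2 QZQ)^p)] = n^{2p-2}\E[\trace((QZ)^p)]$, expanding each factor of $Q = \I - E$ and applying the graphical Weingarten calculus factor-by-factor yields the analog of equation~\eqref{eq:QZQ-exact},
\[
\frac{1}{n^2}\E[\trace((n^2 QZQ)^p)] = \sum_{\alpha,\beta\in\S_{2p}} n^{2p-2}k^{\#\alpha}n^{3p-|\beta\delta|}\Wg(\alpha\beta^{-1})\sum_{f\in\F_p}(-1)^{|f^{-1}(E)|}n^{-(|f^{-1}(E)|+|\alpha\hat{f}^{-1}|)}.
\]
The sign-cancellation argument showing that the inner sum over $f$ vanishes whenever $\alpha \in \V$ depends only on the combinatorics of the $f$-expansion, and transports verbatim from the proof of Proposition~\ref{prop:asymptotics-QZQ-01}. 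Substituting $k \sim c n^d$ and the asymptotics of $\Wg$ from Theorem~\ref{thm:mob}, the exponent of $n$ in each surviving summand (with $\alpha \notin \V$) becomes $-T(\alpha, \beta, f)$, where
\[
T(\alpha, \beta, f) = d|\alpha| + (d+1)|\alpha\beta^{-1}| + |\beta\delta| + |f^{-1}(E)| + |\alpha\hat{f}^{-1}| - 3p + 2.
\]

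The crucial new input, replacing the $d \in (0,1)$ minimization carried out in Proposition~\ref{prop:asymptotics-QZQ-01}, is to show $T \geq 0$ with unique minimizer $\alpha = \beta = \id$, $f \equiv \I$. Since $\alpha \notin \V$, the permutation $\alpha\delta$ is fixed-point-free and so $|\alpha\delta| \geq p$. Combining this with the triangle inequality $|\beta\delta| + |\alpha\beta^{-1}| \geq |\alpha\delta|$ gives $|\beta\delta| + (d+1)|\alpha\beta^{-1}| \geq p + d|\alpha\beta^{-1}|$. Using in addition $|\alpha| + |\alpha\hat{f}^{-1}| \geq |\hat f|$ together with the cycle counts recalled in the proof of Proposition~\ref{prop:asymptotics-QZQ-01} ($|\hat f| = 2p-2$ if $f \equiv \I$, and $|\hat f| = 2p - |f^{-1}(E)|$ otherwise) yields the unified bound
\[
T(\alpha, \beta, f) \geq (d-1)|\alpha| + d|\alpha\beta^{-1}| + 2\cdot \mathbf{1}_{f \not\equiv \I}.
\]
Since $d > 1$, this right-hand side is nonnegative, and vanishes only when $f \equiv \I$ and $\alpha = \beta = \id$; one checks directly that $(\id, \id, \I)$ saturates every intermediate inequality simultaneously. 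Substituting the unique minimizer into the asymptotic formula produces a contribution of $(-1)^0 c^0 \Mob(\id) = 1$, while all other triples $(\alpha, \beta, f)$ contribute $o(1)$. This identifies $\lim_{n\to\iy} \frac{1}{n^2}\E[\trace((n^2 QZQ)^p)]=1$ with the $p$-th moment of the Dirac mass $\delta_1$, as required.

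The main obstacle is this final minimization, and specifically the need to rule out the competing permutation $\beta = \delta$. This permutation \emph{would} dominate without the constraint $\alpha \notin \V$: it is exactly the permutation responsible for the $k^{-p}$ scaling in Proposition~\ref{thm:non-linear} when $p > 2/(2-d)$. However, since $\delta \in \V$, the diagonal choice $\alpha = \beta = \delta$ is killed by the sign cancellation over $f$, and the strict inequality $d > 1$ is then essential to ensure that the coefficient $(d-1)|\alpha|$ is strictly positive away from $\alpha = \id$, thereby pinning down the minimizer.
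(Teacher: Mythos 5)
Your proof is correct and follows essentially the same route as the paper: the same exact moment expansion and sign cancellation over $f$, and the same lower bound for the exponent (your $T$ is the paper's $S$), obtained from $|\alpha\delta|\geq p$ for $\alpha\notin\V$, the triangle inequality $|\alpha\beta^{-1}|+|\beta\delta|\geq|\alpha\delta|$, and $|\alpha|+|\alpha\fhat^{-1}|\geq|\fhat|$, with the strict positivity of $d-1$ isolating $\alpha=\beta=\id$, $f\equiv\I$ as the unique minimizer. The closing remark that $\beta=\delta$ (the dominant permutation in the unpinched moment computation for $p>2/(2-d)$) is excluded because $\delta\in\V$ is a nice explicit observation that the paper leaves implicit, but it does not change the argument.
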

\begin{proof}
 
As in the proof of Proposition \ref{prop:asymptotics-QZQ-01}, we obtain
\begin{equation}\label{eq:QZQ-asympt-12}
	\frac{1}{n^2}\E[\trace(n^2QZQ)^p] \sim \!\!\!\!\!\!\!\!\sum_{f \in \F_p, \; \alpha, \beta \in \S_{2p}, \; \alpha \notin \V}\!\!\!\!\!\!\!\! (-1)^{|f^{-1}(E)|} c^{-(|\alpha| + |\alpha \beta^{-1}|)} n^{-S(\alpha, \beta, f)} \Mob(\alpha \beta^{-1}),
\end{equation}
where
\begin{equation}
	S(\alpha, \beta, f) = |\beta \delta| + d|\alpha| + (d+1)|\alpha \beta^{-1}|+|f^{-1}(E)| + |\alpha \fhat^{-1}| - 3p+2.
\end{equation}
Using the same arguments as before, we obtain that
\[ S(\alpha, \beta, f) \geq |f^{-1}(E)| + |\fhat| + (d-1)|\alpha| - 2p + 2,\]
with equality if and only if $\alpha = \beta \in \{\id \to \fhat\}$. 

Counting the number of cycles of $\fhat$, we have that $|f^{-1}(E)| + |\fhat| \geq 2p - 2$, with equality iff $f \equiv \I$. We  conclude that $S(\alpha, \beta, f) \geq 0$, with equality if and only if $\alpha = \beta = \id$. The result follows by plugging the minimizing values into the asymptotic expression \eqref{eq:QZQ-asympt-12}.
\end{proof}

\begin{theorem} \label{thm:asymptotics-Z-12}
	In the regime $k \sim cn^d$, with $d \in (1, 2)$, the eigenvalues $\lambda_1\geq \cdots \geq\lambda_{n^2}$ of $Z = [\Phi \otimes \ol \Phi] (E_n)$ satisfy:
	\begin{itemize}
		\item In probability, $k\lambda_1\to 1$.
		\item Almost surely, $\frac{1}{n^2-1}\sum_{i=2}^{n^2}\delta_{n^2\lambda_i}$ converges in distribution to the Dirac mass at 1, $\delta_1$.
	\end{itemize}
\end{theorem}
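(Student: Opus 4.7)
The plan is to adapt the proof of Theorem~\ref{thm:asymptotics-Z-01} to the present regime, using Propositions~\ref{thm:non-linear}(4) and \ref{prop:asymptotics-QZQ-12} in place of the corresponding moment inputs. The two assertions of the theorem---the convergence of the top eigenvalue and of the bulk empirical distribution---are proved separately and then glued via Cauchy's interlacing theorem (\cite{bhatia}, Corollary III.1.5).

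For the bulk, note that $Q = \I - E_n$ has rank $n^2 - 1$ and $QZQ$ has a forced zero eigenvalue in the direction of the maximally entangled state. Hence its nonzero eigenvalues $\mu_1 \geq \cdots \geq \mu_{n^2-1}$ exhaust the spectrum, and Proposition~\ref{prop:asymptotics-QZQ-12} asserts that $\frac{1}{n^2-1}\sum_i \delta_{n^2 \mu_i}$ converges in moments to $\delta_1$. Cauchy's interlacing theorem then gives $\mu_{i-1} \geq \lambda_i \geq \mu_i$ for $2 \leq i \leq n^2$, which forces the rescaled empirical distribution of $\lambda_2, \ldots, \lambda_{n^2}$ to converge to the same Dirac mass. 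The upgrade from convergence in moments to almost-sure convergence is handled by the Borel--Cantelli machinery outlined in Remark~\ref{rem-as}.

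For the top eigenvalue, fix any integer $p > 2/(2-d)$ and apply item~4 of Proposition~\ref{thm:non-linear}: $\E[\trace((kZ)^p)] \to 1$. By Cauchy interlacing, $\sum_{i=2}^{n^2}\lambda_i^p \leq \sum_i \mu_i^p$, while Proposition~\ref{prop:asymptotics-QZQ-12} unpacks as $\E\bigl[\sum_i \mu_i^p\bigr] = O(n^{2-2p})$. Hence the bulk contribution to the $p$-th trace moment is bounded by $\E\bigl[\sum_{i=2}^{n^2}(k\lambda_i)^p\bigr] = O(k^p n^{2-2p}) = O(n^{2 - p(2-d)})$, which is $o(1)$ precisely because $p > 2/(2-d)$. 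Subtracting, $\E[(k\lambda_1)^p] \to 1$. Applying the same reasoning with $2p$ in place of $p$ yields $\mathrm{Var}[(k\lambda_1)^p] \to 0$, so by Chebyshev's inequality $(k\lambda_1)^p \to 1$ in probability, and taking the $p$-th root gives $k\lambda_1 \to 1$ in probability.

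The only genuine delicacy is the book-keeping around the bulk estimate: one must be careful that the interlacing inequality $\sum_{i\geq 2}\lambda_i^p \leq \sum_i \mu_i^p$ is taken inside the expectation and that the resulting rate $O(n^{2-p(2-d)})$ is actually negligible, which is the reason the threshold $p > 2/(2-d)$ appears. Everything else is mechanical, mirroring the $d \in (0,1)$ case. As emphasized in Remark~\ref{rem-as}, the first-eigenvalue statement cannot be promoted to almost-sure convergence, since the relevant variances decay but are not summable in $n$.
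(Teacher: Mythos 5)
Your proof is correct and follows the same route the paper takes: combine the asymptotic moment formula from Proposition~\ref{thm:non-linear} (item~4), the compressed-matrix result of Proposition~\ref{prop:asymptotics-QZQ-12}, and Cauchy's interlacing theorem; the top-eigenvalue bound comes from subtracting the (vanishing) bulk contribution from the $p$-th trace moment for $p>2/(2-d)$ and a second-moment/Chebyshev argument, while the bulk is squeezed between the interlaced spectra. The only cosmetic slip is that for $i=n^2$ there is no $\mu_{n^2}$, so the interlacing reads $\lambda_i\leq\mu_{i-1}$ for $2\leq i\leq n^2$, but this does not affect the bound $\sum_{i\geq 2}\lambda_i^p\leq\sum_j\mu_j^p$ or anything downstream.
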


\section{Asymptotics of the von Neumann entropy}\label{sec:asympt-ent}

Using the moment information and the behavior of the lower part of the spectrum deduced in Section \ref{sec:non-linear}, we now analyze the von Neumann entropy of the random output matrix $Z$:
\begin{equation}
	H(Z) = -\sum_{i=1}^{n^2} \lambda_i \log \lambda_i.
\end{equation}

In the physical literature, the idea of using conjugate quantum channels to tackle important questions, such as the additivity of the minimum output entropy for quantum channels, dates back to the work of A. Winter and P. Hayden \cite{hayden-winter}. To bound the entropy of $Z$, they use the following fact, coming from linear algebra, which is independent of the random model used: the largest eigenvalue of $Z$ is larger that the inverse of the dimension of the ancilla space, $\lambda_1 \geq 1/k$. This bound (which is actually a bound on the operator norm of $Z$) yields the following bound on the von Neumann entropy:
\begin{equation}\label{eq:naive-bound}
	H(Z) \leq 2 \log k - \frac{\log k}{k} + \frac{1}{k}.
\end{equation}

One of the main applications of the results in this paper is the fact that our exact spectral information yield better upper bounds for the von Neumann entropy in some specific cases. 

\begin{theorem}\label{thm:entropy}
The asymptotic von Neumann entropy of the random output matrix \\
$Z = [\Phi \otimes \ol \Phi] (E_{n})$ is given by:
\begin{enumerate}
	\item If $d=0$ ($k=c$ is an integer):	
		\begin{equation}\label{eq:entropy-d-0}
			H(Z)  =  -\left( \frac{1}{c} + \frac{1}{c^2} - \frac{1}{c^3} \right)\log \left( \frac{1}{c} + \frac{1}{c^2} - \frac{1}{c^3} \right) - (c^2-1) \left(\frac{1}{c^2}-\frac{1}{c^3} \right) \log \left(\frac{1}{c^2}-\frac{1}{c^3} \right) +o(1).
		\end{equation}
	\item If $d\in (0, 1)$:
		\begin{equation}
			\begin{split}
				&H(Z) = 2 \log k + o(1).
			\end{split}
		\end{equation}	
	\item If $d=1$ (see \cite{collins-nechita-3}):
		\begin{equation}\label{eq:entropy-d-1}
			H(Z) = 
				\begin{cases}
				2\log k - \frac{c^2}{2}+o(1) \quad &\text{ if } \quad 0<c<1,\\
				2\log n - \frac{1}{2c^2}+o(1) \quad &\text{ if } \quad c\geq 1.
				\end{cases}
		\end{equation}
	\item If $d \in (1,2)$:
		\begin{equation}
			\begin{split}
				&H(Z) = 2 \log n + o(1).
			\end{split}
		\end{equation}	
	\item If $d \geq 2$:
		\begin{equation}
			\begin{split}
				&H(Z) = 2 \log n + o(1).
			\end{split}
		\end{equation}		
\end{enumerate} 
\end{theorem}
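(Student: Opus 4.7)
The plan is to read the entropy directly off the spectral descriptions obtained in Sections \ref{sec:linear-gen} and \ref{sec:non-linear}. In every regime with $d > 0$, $Z$ has at most one eigenvalue on a scale larger than $1/N$, where $N = k^2$ for $d \in (0,1)$ and $N = n^2$ for $d \geq 1$, while the remaining ``bulk'' eigenvalues have an empirical measure that converges, after rescaling by $N$, to an explicit limit law $\mu$. The regime $d = 0$ is an exception, with only two distinct deterministic limit values, and can be handled by direct substitution into $-\lambda \log \lambda$, immediately yielding \eqref{eq:entropy-d-0}.

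For $d > 0$, I would apply the algebraic identity
\begin{equation*}
  -\sum_{i \in I} \lambda_i \log \lambda_i = (\log N) \sum_{i \in I} \lambda_i - \frac{|I|}{N} \int x \log x \, d\mu_N(x),
\end{equation*}
where $\mu_N = \frac{1}{|I|}\sum_{i \in I} \delta_{N \lambda_i}$ and $I$ is the index set of bulk eigenvalues (taken to be all of $\{1,\ldots,n^2\}$ in the case $d \geq 2$, where $\lambda_1$ is no longer separated from the rest). In every case $|I|/N \to 1$ and $\sum_{i \in I} \lambda_i \to 1$, so the problem is reduced to identifying the limit of $\int x \log x \, d\mu_N$. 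When $\lambda_1$ is separated (i.e.\ $0 < d < 2$), one additionally needs the estimate $-\lambda_1 \log \lambda_1 = O((\log k)/k) = o(1)$, which follows from the in-probability convergence $k\lambda_1 \to 1$ combined with bounded convergence on $[0,1]$.

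By Theorems \ref{thm:asymptotics-Z-01} and \ref{thm:asymptotics-Z-12}, in the regimes $d \in (0,1)$ and $d \in (1,2)$ the limit law is $\mu = \delta_1$, hence $\int x \log x\, d\mu_N \to 0$ and $H(Z) = \log N + o(1)$, which is $2\log k$ or $2\log n$ as stated. The same reasoning applies to the regime $d \geq 2$ via the corresponding proposition at the end of Section \ref{sec:non-linear}. In the borderline case $d = 1$, Theorem \ref{thm:conjugate} identifies $\mu$ as the free Poisson $\pi_{c^2}$, and plugging in the explicit integral \eqref{eq:entropy-free-poisson} produces the correction $-c^2/2$ (for $c < 1$) or $-1/(2c^2)$ (for $c \geq 1$), which matches \eqref{eq:entropy-d-1}.

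The main obstacle will be the transition from moment convergence of $\mu_N$ to convergence of the non-polynomial integral $\int x \log x \, d\mu_N$. Since each limit measure ($\delta_1$ or $\pi_{c^2}$) has compact support, moment convergence upgrades to weak convergence; and although $x \log x$ is continuous but unbounded at infinity, the uniform control on the second moment $\int x^2 \, d\mu_N$ afforded by Propositions \ref{thm:asympt-Z} and \ref{thm:non-linear} provides enough uniform integrability to pass to the limit by a standard truncation argument. Finally, the crude bound $0 \leq H(Z) \leq 2\log n^2$ together with bounded convergence promotes the in-probability convergence of $\lambda_1$ to convergence of $\E[-\lambda_1 \log \lambda_1]$, yielding the stated deterministic $o(1)$ asymptotics.
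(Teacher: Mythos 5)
Your overall strategy is the right one and, since the paper itself states Theorem~\ref{thm:entropy} without a written proof, it is essentially the intended route: read the entropy off the spectral data, using the identity
\begin{equation*}
  -\sum_{i \in I} \lambda_i \log \lambda_i
  \;=\;
  (\log N)\sum_{i\in I}\lambda_i \;-\; \frac{|I|}{N}\int x\log x\, d\mu_N(x),
  \qquad
  \mu_N = \frac{1}{|I|}\sum_{i\in I}\delta_{N\lambda_i},
\end{equation*}
kill the top eigenvalue via the crude bound $0 \leq -\lambda_1\log\lambda_1 \leq 1/e$ together with $k\lambda_1 \to 1$, and pass from moment convergence of $\mu_N$ to convergence of $\int x\log x\,d\mu_N$ through tightness of second moments. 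All of that is correct and handles $d=0$, $d\in(0,1)$, $d\in(1,2)$, $d\geq 2$ cleanly, since in those regimes the limiting bulk measure is either a pair of atoms ($d=0$) or $\delta_1$ with the \emph{same} normalization ($N=k^2$ or $N=n^2$) that appears in Theorems~\ref{thm:asymptotics-Z-01}, \ref{thm:asymptotics-Z-12} and the $d\geq 2$ proposition.

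There is, however, a genuine slip in the $d=1$ case. With your convention $N = n^2$, the bulk empirical measure $\mu_N = \frac{1}{n^2-1}\sum_{i\geq 2}\delta_{n^2\lambda_i}$ does \emph{not} converge to $\pi_{c^2}$: Theorem~\ref{thm:conjugate} rescales by $c^2n^2$, so the limit of $\mu_N$ is the pushforward $\nu$ of $\pi_{c^2}$ under $x\mapsto x/c^2$. If you plug \eqref{eq:entropy-free-poisson} directly into your identity as written you get $H(Z) = 2\log n - \tfrac{c^4}{2} + o(1)$ for $c<1$, which is wrong. What you actually need is
\begin{equation*}
  \int x\log x\, d\nu
  = \frac{1}{c^2}\int x\log x\, d\pi_{c^2} - 2\log c
  =
  \begin{cases}
    \tfrac{c^2}{2} - 2\log c, & 0<c<1,\\[2pt]
    \tfrac{1}{2c^2}, & c\geq 1,
  \end{cases}
\end{equation*}
and then $H(Z) = 2\log n - \int x\log x\,d\nu + o(1)$, which gives $2\log n + 2\log c - \tfrac{c^2}{2} = 2\log k - \tfrac{c^2}{2}$ for $c<1$ and $2\log n - \tfrac{1}{2c^2}$ for $c\geq 1$. (Alternatively you can keep $\mu = \pi_{c^2}$ but then you must take $N = c^2n^2$, in which case $|I|/N \to 1/c^2$ rather than $1$, and the prefactor $\tfrac{|I|}{N}$ supplies the same $1/c^2$.) You arrive at the correct displayed answer, so this is a bookkeeping slip rather than a conceptual gap, but the intermediate assertions as written (``$|I|/N\to 1$ in every case'' combined with ``$\mu = \pi_{c^2}$'') are mutually inconsistent and would produce the wrong constant if taken literally.

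One further cosmetic point: the entropy is bounded by $\log(n^2) = 2\log n$, not by $2\log n^2$, though this does not affect the bounded-convergence argument.
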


One can make the following very instructive observations about the above theorem. First, notice that in all cases where $d>0$, the main contribution to the von Neumann entropy is given by the lower part of the spectrum, and not by the main eigenvalue. This is in contrast with the case of the $p$-R\'enyi entropies ($p>0$), where the largest eigenvalue gives the main contribution \cite{hayden-winter}. 

Another important point concerns the fixed dimension case $d=0$  and the  linear case, $d=1$, $k \sim cn$. In these regimes, the entropy defect $2 \log (k \wedge n) - H(Z)$ is macroscopic \eqref{eq:entropy-d-0},\eqref{eq:entropy-d-1}. 
This improves considerably the naive bound \eqref{eq:naive-bound} and may provide more insight into the question of additivity of minimal output entropies, as argued in \cite{hastings, collins-nechita-2, aubrun-szarek-werner-2}.
However, Hastings' techniques \cite{hastings}, as well as the recent developments of Aubrun, Szarek and Werner \cite{aubrun-szarek-werner-2} do not seem to apply to this linear regime. Studying additivity violations in the linear regime and using the bounds in Theorem \ref{thm:entropy} to provide larger violations remain interesting open problems.

\section*{Acknowledgements}

B.C. was partly funded by ANR GranMa and ANR Galoisint.
The research of both authors was supported in part by NSERC
grant RGPIN/341303-2007.


\bibliographystyle{mdpi}
\makeatletter
\renewcommand\@biblabel[1]{#1. }
\makeatother

\end{document}